\documentclass[12pt]{article}
\usepackage{amsmath}
\usepackage{graphicx}
\usepackage{enumerate}
\usepackage{natbib}
\usepackage{url} % not crucial - just used below for the URL 

\newcommand{\blind}{1}

\usepackage{subcaption}
\usepackage{amsthm}
\usepackage{verbatim}
\usepackage{multirow}
\usepackage{color}
\usepackage{longtable}
\usepackage{tabularx}
\usepackage{threeparttable}
\usepackage{booktabs}

\usepackage{wanjiestyle}

\newcommand{\Gap}{Gap}

\usepackage{amssymb}

\definecolor{darkgreen}{RGB}{0,100,0}

\begin{document}

\def\spacingset#1{\renewcommand{\baselinestretch}%
{#1}\small\normalsize} \spacingset{1}

%%%%%%%%%%%%%%%%%%%%%%%%%%%%%%%%%%%%%%%%%%%%%%%%%%%%%%%%%%%%%%%%%%%%%%%%%%%%%%

\if1\blind
{
  \title{\bf NetPTR: Optimal Differentially Private Spectral Community Detection on Sparse Networks}
  \author{
    Wanjie Wang\thanks{
    The authors gratefully acknowledge Singapore MOE grant Tier-1-A-8001451-00-00 and NUS Research Scholarship (IRP).}\hspace{.2cm}\\
    Department of Statistics and Data Science, National University of Singapore\\
    Department of Mathematics, National University of Singapore\\
    Tao Shen\hspace{.2cm}\\
    Department of Statistics and Data Science, National University of Singapore}
  \maketitle
} \fi

\if0\blind
{
  \bigskip
  \bigskip
  \bigskip
  \begin{center}
    {\LARGE\bf NetPTR: Optimal Differentially Private Spectral Community Detection on Sparse Networks}
\end{center}
  \medskip
} \fi

\bigskip
\begin{abstract}
Spectral community detection estimates latent labels from the leading eigenspace of a network adjacency matrix, but releasing the resulting labels can disclose sensitive relational information. We consider this problem under differential privacy for both ordinary and bipartite networks. For ordinary networks, the protected unit is a single edge, leading to edge differential privacy (edge-DP). For bipartite networks, the inferential target is the community structure of the left-side nodes, while the protected unit is an entire right-side incidence profile, leading to column-node-DP. 
We propose NetPTR, a private spectral clustering procedure that releases a noisy empirical spectral embedding after a stability test. The algorithm requires perturbation bounds for empirical eigenspaces under neighboring-network changes, which yield computable stability certificates and local sensitivity bounds. For ordinary networks, we establish edge-DP and the error bound under the degree-corrected stochastic blockmodel, which separates the non-private spectral clustering error from the additional privacy-induced error. It therefore guarantees weak consistency in sparse networks and exact recovery in moderate sparse networks. A matching lower bound shows that the required privacy budget is sharp up to logarithmic factors. We further develop a column-node-DP algorithm for bipartite networks and prove consistency under a bipartite degree-corrected block model. Simulations and real-data examples illustrate the resulting privacy--accuracy tradeoff.
\end{abstract}

\noindent%
{\it Keywords:}  Differential privacy; Propose-Test-Release; Social networks; Spectral method; Community detection.
\vfill

\newpage
\spacingset{1.9} % DON'T change the spacing!

\section{Introduction}\label{sec:intro}
Let $A\in\{0,1\}^{n\times n}$ be the adjacency matrix of an undirected network on a fixed node set $[n]$, and let $\ell\in[K]^n$ denote the latent community labels. Spectral community detection estimates $\ell$ from the leading eigenspace of $A$ via row normalization and clustering \citep{newman2013community, SCORE, hu2024network, Zhao2025}. 
In social network applications, the edge $A_{ij}$ may itself be sensitive. It records whether two individuals maintain a relationship, communication tie, or affiliation. Spectral community labels aggregate the edge pattern into a lower-dimensional structural summary, and this summary contains edge information. Hence, releasing community labels can disclose private edges, especially given auxiliary information about this network. 

Differential privacy (DP) formalizes this concern by imposing stability of the output distribution under neighboring datasets \citep{dwork2014algorithmic}. This neighboring relation determines the unit to protect. For ordinary networks, we use edge adjacency: two networks $A$ and $A'$ are neighboring if they differ in only one undirected edge. An $(\varepsilon, \delta)$-edge-DP community detection procedure must release labels whose distribution is insensitive to the choice of $A$ or $A'$, and hence insensitive to the presence or absence of any single dyadic relation. Therefore, attackers cannot reveal the edge information based on released labels. This privacy requirement is imposed on the released estimator, not on the network itself. 

We also study bipartite community detection. Let $B\in\{0,1\}^{n\times m}$ record edges between left-side nodes $U = [n]$ and right-side units $V = [m]$. The target is to cluster the left-side nodes $U$ using their connection patterns across $V$. 
Such data arise in voting records, user--item interactions, and affiliation networks. In these applications, one unit in $V$ may represent a roll call, item, or organization, and its column in $B$ records the full response or connection profile across $U$. Protecting this profile leads to column-node privacy, which is stronger than protecting a single edge. We therefore define neighboring datasets $B \sim_N B'$, where $B$ and $B'$ differ in one entire column, and we consider the private release of left-side community labels under column-node differential privacy. 

A common strategy for private network analysis is to privatize the whole network and then perform any downstream task on it \citep{karwa2017sharing, mulle2015privacy}. 
This approach is broadly applicable, but it is not tailored to specific tasks. When it comes to community detection, such network-level perturbation suffers too heavy noise and becomes statistically inefficient. Randomized edges may distort degree patterns or weaken community separation, especially in sparse or heterogeneous networks. This motivates a private mechanism that acts directly on the spectral estimator rather than on the full network.

The difficulty is that direct privatization requires a uniform sensitivity bound for the empirical spectral embedding. Differential privacy must hold over all neighboring datasets, whereas spectral community detection theory is typically probabilistic. Standard spectral analyses establish eigenspace stability with high probability under a specific network model. Privacy, however, must also control atypical networks with negligible model probability. Such networks may have small eigengaps, irregular degree behavior, or highly coherent empirical eigenvectors, leading to large sensitivity under neighboring changes. Moreover, unlike mean estimation or regression, spectral clustering has no simple projection onto a convex regular domain: the relevant stability conditions involve global and nonconvex properties of the adjacency matrix. This mismatch between probabilistic spectral theory and deterministic privacy requirements is the central obstacle.  

We develop \emph{NetPTR}, an efficient differentially private spectral community detection approach that privatizes the spectral embedding directly. We develop one-edge perturbation analysis for general networks. Based on this analysis, we construct a a computable stability certificate for the observed network. When the certificate indicates local spectral stability, NetPTR releases a noisy empirical eigenspace with noise calibrated to the certified local sensitivity; otherwise, the method returns a data-independent output. 
This mechanism allows us to connect the differential privacy requirement on all datasets and the spectral community detection theory on regular datasets.

\subsection{Our Contributions}
We propose NetPTR to protect one edge for ordinary networks and Bi-NetPTR to protect one column for bipartite networks. Both approaches are based on a well-designed stability certificate that connects the differential privacy requirement over all datasets and the probabilistic properties on regular datasets. Our contribution are as follows. 

First, our NetPTR privatizes the empirical spectral embedding directly.
NetPTR applies a stability-based release rule to the leading empirical eigenspace and then performs the usual row-normalization and $K$-means steps. We establish theoretical guarantee on its $(\varepsilon, \delta)$-differential privacy. To our knowledge, this is the first estimator-specific edge-DP method for spectral community detection that directly controls the sensitivity of the spectral embedding and comes with statistical recovery guarantees. 

Second, we develop new one-edge perturbation analysis for private spectral clustering. Classical perturbation theories assume arbitrary noise, but not tailored to the perturbation with only one edge. We establish an exact bound that controls the eigenspace perturbation both in spectral norm and in $2 \to \infty$ norm under verifiable conditions involving degree scale, signal and noise eigenvalues, and row-wise incoherence. 
These quantities define a computable stability certificate $\gamma_E$ and a local sensitivity level for noise calibration. 
Hence, this analysis is the key that connects differential privacy to spectral clustering accuracy. 

Third, we establish consistency and rate-optimality of NetPTR under the degree-corrected stochastic blockmodel (DCSBM). The protection of privacy inevitably brings a loss in accuracy, and this accuracy loss must be discussed under model assumptions. Under a regular DCSBM, we find the misclustering rate of NetPTR, which decomposes into the usual non-private spectral clustering error and an additional privacy-induced term. Roughly speaking, when the privacy budget $\varepsilon$ is larger than the reciprocal of average degree, NetPTR achieves weak consistency, and even strong consistency in the moderate sparse networks. We further establish the lower bound for all edge-DP estimators, where no edge-DP estimator can achieve strong consistency when $\varepsilon$ is smaller than this scale. Hence, NetPTR is sharp up to logarithmic factors.

Fourth, we extend the construction to bipartite networks under column-node privacy.
In this setting, the protected unit is a column of the adjacency matrix and the target is the community structure of the left-side nodes. 
Hence, the perturbation geometry is different from edge-DP for ordinary networks. We construct an eigengap-based stability certificate for the left spectral embedding. Based on this stability certificate, we propose Bi-NetPTR, prove $(\varepsilon, \delta)$-column-node DP, and establish consistency; see Theorems~\ref{thm:BiDP}--\ref{thm:BiOptimal}. To our knowledge, this is the first node-DP spectral community detection method with a corresponding statistical analysis for bipartite networks.

Finally, we evaluate our methods on synthetic and real networks. The simulations validate the predicted privacy-utility tradeoff under both DCSBM and Bi-DCSBM, including the effects of privacy budget, network size, degree scale, and the number of right-side nodes. The Flickr contact network and Senate roll-call analysis illustrate that NetPTR can retain interpretable community structure under the corresponding privacy notions.
 
\subsection{Related Literature}
Differential privacy for graph data is commonly studied under edge-level or node-level neighboring relations; see \cite{li2023private} and \cite{mueller2022sok} for surveys. 
Existing work either releases private graph summaries, such as degree
distributions and subgraph counts \citep{hay2009accurate,nguyen2023faster}, or
privatizes the graph through randomized-response and edge-flipping mechanisms
\citep{warner1965randomized,holohan2017optimal}. For a privatized graph, any downstream analysis is private. However, this perturbation is calibrated to the graph representation rather than to a specific inferential target. \cite{mohamed2022differentially} points out that such graph perturbation mechanism cannot outperform other mechanisms, but has a lower computation complexity. \cite{hehir2022consistent} establishes weak consistency of edge-flipped spectral clustering, which requires the degree scale to be at $\sqrt{n}$. Related extensions include edge-flipping spectral clustering for well-clustered graphs \citep{mukherjee2025local}, privacy-integrated graph clustering \citep{mulle2015privacy,nguyen2016detecting} and personalized edge flipping  \citep{zhen2024consistent}. 
More recently, \cite{klopp2026node,marchis2026node} investigate node-private community recovery. 
Research about bipartite networks are much less. \cite{he2024common} has studied the common-neighborhood estimation under edge-DP. Our method instead privatizes the empirical spectral embedding directly and provides theory for both edge-DP ordinary networks and column-node-DP bipartite networks, including sparse networks. 

%The subsequent analyses are private based on these private outputs. However, their perturbation is calibrated to the graph presentation rather than to a specific inferential target, such as community detection. 
%Graph-release mechanisms are broadly applicable because all downstream analyses are private by post-processing, but their perturbation is not adapted to the target estimator. For community detection, this can degrade the low-rank structure that drives spectral recovery, especially in sparse or weak-signal networks. 
%Recent work studies private community recovery under graph perturbation or local privacy, rather than exploiting the perturbation structure of the spectral estimator itself \citep{mohamed2022differentially,hehir2022consistent,mukherjee2025local}. 
%Our method privatizes the spectral clustering estimator directly through a stability certificate derived from network perturbation analysis.

A standard DP release calibrates noise to global sensitivity \citep{dwork2014algorithmic}, which can be conservative when an estimator is stable on typical datasets but unstable on rare inputs.  Propose-Test-Release (PTR) avoids this worst-case calibration by privately testing whether the input is sufficiently far from instability before releasing a noisy statistic \citep{dwork2009differential, brunel2020propose}. 
Recent efficient PTR methods replace the exact distance-to-instability by a computable Lipschitz sub-distance \citep{shen2026efficient}. In our setting, we use this efficient PTR template, but design it to adapt the network setting. The main task is to establish the perturbation results of empirical leading eigenspace, so that we can construct a computable stability certificate that satisfies the requirements of efficient PTR template.

{\it Matrix Perturbation.} Spectral community detection relies on eigenspace perturbation theory \citep{rohe2011spectral, lei2015consistency}. Davis--Kahan bounds control subspace error through an eigengap \citep{daviskahan}, while entrywise and $2\to\infty$ perturbation bounds give the row-wise control needed for exact label recovery \citep{fan2018ell_, tong2025uniform}. These results usually compare a random adjacency matrix with its population mean. Differential privacy requires a different comparison: two neighboring observed networks. For edge-DP, we analyze the one-edge perturbation of the empirical eigenspace and obtain spectral-norm and $2\to\infty$ local sensitivity bounds. This calculation is what makes the stability certificate $\gamma_E$ computable.

\subsection{Organization and Notations}
The remainder of the paper is organized as follows. Section~\ref{sec:prelim} reviews spectral clustering under the network model and the efficient PTR template. Section~\ref{sec:dp} develops the edge-DP NetPTR procedure for ordinary networks, where the consistency and lower-bound guarantee are established in  Section~\ref{sec:optimaldp}. Section~\ref{sec:bipartite} extends the method to bipartite networks under column-node privacy. 
Sections~\ref{sec:simulation} and ~\ref{sec:data} present simulation and real-data studies. Proofs and additional numerical results are deferred to the supplementary material.

For an integer $n$, write $[n]=\{1,\ldots,n\}$. For a matrix $M$, let $\lambda_k(M)$ denote its $k$th largest eigenvalue in magnitude for symmetric $M$, and $\|M\|$, $\|M\|_F$, $\|M\|_{\infty}$ and $\|M\|_{2,\infty}$ its spectral, Frobenius, maximum row-sum, and $2 \to \infty$ norms, respectively. 
%Recall the $2\to \infty$ norm is that $\|M\|_{2,\infty}=\max_{i\in[n]}\|e_i^\top M\|_2$ .
We use $A\sim_E A'$ for edge-adjacent ordinary networks and $B\sim_N B'$ for column-node-adjacent bipartite networks. The notation $c$ and $C$ denotes positive constants whose values may change from line to line. We write $a_n\lesssim b_n$ if $a_n\le Cb_n$ for a constant $C$, and $a_n\asymp b_n$ if both $a_n\lesssim b_n$ and $b_n\lesssim a_n$ hold.

\section{Spectral Community Detection and Efficient PTR}\label{sec:prelim}

\subsection{Network Model and Spectral Community Detection}\label{subsec:dcsbm}
We first introduce the network model and the non-private spectral community detection procedure that will be privatized in later sections. 
While our private procedure does not rely on specific network models, the consistency and optimality analysis require a model.  

Let $A\in\{0,1\}^{n\times n}$ be the adjacency matrix of an undirected graph on the node set $[n]$. We consider the degree-corrected stochastic block model (DCSBM) \citep{DCSBM}. Under DCSBM, $A$ has independent Bernoulli entries where $A_{ij}\sim \mathrm{Bernoulli}(\Omega_{ij})$ for a population matrix $\Omega$, with $A_{ji}=A_{ij}$ and $A_{ii}=0$. The population matrix takes the form 
$\Omega=\Theta\Pi P\Pi^\top\Theta$,
where $\Theta=\mathrm{diag}(\theta_1,\ldots,\theta_n)$ contains node-specific degree parameters, $\Pi\in\{0,1\}^{n\times K}$ is the membership matrix, and $P\in\mathbb{R}^{K\times K}$ is the block connectivity matrix. Let $\ell(i)\in[K]$ denote the community label of node $i$, then $\Pi_{ik}=1\{\ell(i)=k\}$. The statistical target is the label vector $\ell=(\ell(1),\ldots,\ell(n))$ with $K$ being known.

The eigenspace formed by $K$ leading eigenvectors is used to recover $\ell$. 
Let $\Xi_K\in\mathbb{R}^{n\times K}$ collect the eigenvectors of the population matrix $\Omega$ associated with its $K$ largest eigenvalues in magnitude. In \citep{SCORE, lei2015consistency, hu2024network}, it is shown that there exists a nonsingular matrix $H \in \reals^{K \times K}$, so that for node $i$ with $\ell(i) = k$, $e_i^\top \Xi_K=\theta_i e_k^\top H$. Spectral community detection algorithms have been proposed based on this observation in the existing literature, and we consider Algorithm \ref{alg:spectralcluster} in \cite{lei2015consistency} for privacy analysis.

\begin{algorithm}
\caption{Spectral community detection}\label{alg:spectralcluster}
\begin{algorithmic}[1]
\Require Adjacency matrix $A$, number of cluster $K$
\Ensure Clustering centers $\mhat_{[K]}$, and label $\lhat$, and $\Xihat,\Lambdahat$
\State Find eigen-decomposition $A=\Xihat \Lambdahat \Xihat\ic$. 
\State Find the first $K$ eigenvectors matrix $\Xihat_K\in \reals^n$
\State Renormalize $\hat{\Xi}_K$ by rows, where $\rhat_i = {\Xihat_K\ic e_i}/{\|\Xihat_K\ic e_i\|}$
\State Apply K-means on $\{\rhat_{i}\}_{i\in [n]}$ with $K$ clusters, to obtain $\mhat_{[K]}$ and $\lhat$. 
\State Output $\mhat,\lhat, \Xi,\Lambdahat$
\end{algorithmic}
\end{algorithm}

\subsection{Efficient Propose-Test-Release Framework}\label{subsec:ptr}
We first recall the definition of differential privacy. Let $\mathcal A$ be a data space with an adjacency relation $\sim$. A randomized estimator $\hat{\theta}:\mathcal A\to\mathcal O$ is $(\varepsilon,\delta)$-differentially private with respect to $\sim$ if, for all $A,A'\in\mathcal A$ satisfying $A\sim A'$ and all measurable sets $S\subseteq\mathcal O$,
\[
P(\hat{\theta}(A)\in S)
\le
e^\varepsilon P(\hat\theta(A')\in S)+\delta .
\]
Here, $\varepsilon$ and $\delta$ measures the protection on privacy and the adjacency relation specifies the protected unit. 

Propose-test-release (PTR) is designed for estimators with an excessive global sensitivity but a small local sensitivity on a regular set in the data space \citep{dwork2009differential}. 
Given a target sensitivity level $\alpha$, PTR first identifies the datasets $\mathcal{G}_{\alpha}$ on which the estimator is locally $\alpha$-stable. It then privately tests on the distance from $A$ to the unstable region, the complement $\mathcal{G}_{\alpha}^c$. When the test is passed, PTR releases the estimator with a noise calibrated by the local sensitivity $\alpha$; otherwise a data-independent response $\perp$ is released. Hence, the noise level is reduced to the local sensivity $\alpha$.

In the test step of PTR, the exact distance from the observed dataset $A$ to the unstable region $\mathcal{G}_{\alpha}^c$ is required. For spectral estimators, $\mathcal{G}_{\alpha}^c$ is nonconvex and this distance calculation is generally intractable. Efficient PTR (ePTR) replaces the exact distance by a computable Lipschitz sub-distance \citep{shen2026efficient}, which can be regarded as the stability certificate of $A$. 
Its template is in Algorithm \ref{alg:eptr} and the theoretical guarantee is in Theorem \ref{thm:fPTR}.

\begin{algorithm}
\caption{Efficient propose-test-release (ePTR)}\label{alg:eptr}
\begin{algorithmic}[1]
\Require Data $A$, $\widehat\theta$, sub-distance $\gamma$, local sensitivity level $\alpha$, privacy parameters $\varepsilon,\delta$
\Ensure An $(\varepsilon, \delta)$-DP release $\tilde\theta(A)$
\State Compute $M=1+\frac{2}{\varepsilon}\log (\frac{1}{\delta})$ and $\gamma(A)$.
\If{$\gamma(A)>2M$}
 Set $p(A)=1$
\Else 
\[
\mbox{Set}\quad 
p(A)=
\frac{\exp[\varepsilon\{\gamma(A)-M\}/2]}{\exp[\varepsilon\{\gamma(A)-M\}/2]+1}
\]
\EndIf
\State Generate $\zeta \sim \mathcal{N}(0, I)$, and let 
\[
\thetatilde(A) = \left\{\begin{array}{ll}
\mbox{a data-independent response } \bot, \quad & \text{with prob. } 1-p(A),\\
\thetahat(A)+\frac{\alpha}{\varepsilon}\sqrt{2\log(1.25/\delta)} \zeta,   \quad & \text{with prob. } p(A). 
\end{array}\right.
\]
\end{algorithmic}
\end{algorithm}

\begin{thm}
\label{thm:fPTR}
Consider a data space $\mathcal A$ equipped with an adjacency relation $\sim$, and let $\widehat\theta:\mathcal A\to\mathbb R^d$ be an estimator. Fix $\alpha>0$. Suppose there exists a set $\mathcal G\subset\mathcal A$ such that
\[
    \sup_{A\in\mathcal G,\ A'\sim A}
    \|\widehat\theta(A)-\widehat\theta(A')\| \le \alpha .
\]
Let $\gamma:\mathcal A\to\reals_+$ satisfy
    $|\gamma(A)-\gamma(A')|\le 1$ for all 
    $A\sim A'$ and 
    $\{A\in\mathcal A:\gamma(A)>0\}\subset\mathcal G $.
Then the release $\widetilde\theta(A)$ produced by Algorithm~\ref{alg:eptr} is $(\varepsilon,\delta)$-differentially private.
\end{thm}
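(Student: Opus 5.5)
The plan is a standard PTR decomposition: split the output into the release branch and the $\bot$ branch, show the release probability $p(\cdot)$ is $(\varepsilon/2)$-stable between neighbours, and show the Gaussian release is $(\varepsilon/2,\delta)$-stable whenever it can occur with non-negligible probability. Fix neighbours $A\sim A'$ and a measurable set $S\subseteq\mathcal O$, and write
\[
P(\tilde\theta(A)\in S)=\{1-p(A)\}\,1\{\bot\in S\}+p(A)\,P\bigl(\widehat\theta(A)+\sigma\zeta\in S\bigr),
\qquad \sigma=\frac{\alpha}{\varepsilon}\sqrt{2\log(1.25/\delta)} .
\]

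\emph{Step 1: stability of the release probability.} Define the map $g(x)=e^{\varepsilon(x-M)/2}/\{e^{\varepsilon(x-M)/2}+1\}$ for $x\le 2M$ and $g(x)=1$ for $x>2M$, so that $p(A)=g(\gamma(A))$. Since $|\gamma(A)-\gamma(A')|\le 1$, I will show directly that $p(A)\le e^{\varepsilon/2}p(A')$ and $1-p(A)\le e^{\varepsilon/2}\{1-p(A')\}+\delta'$, with $\delta'$ a small multiple of $\delta$.

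The logistic part is handled by the elementary fact that both $\log g$ and $\log(1-g)$ are $(\varepsilon/2)$-Lipschitz in $x$. The truncation at $2M$ introduces a jump in $1-p$ from $1/(e^{\varepsilon M/2}+1)$ to $0$. The choice $M=1+\frac{2}{\varepsilon}\log(1/\delta)$ makes $e^{-\varepsilon M/2}\le\delta$, so this jump is absorbed into the additive $\delta$ term.

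\emph{Step 2: the release branch.} There are two cases.
\begin{itemize}
\item If $\gamma(A)>0$ and $\gamma(A')>0$, then both $A$ and $A'$ lie in $\mathcal G$, so $\|\widehat\theta(A)-\widehat\theta(A')\|\le\alpha$. The Gaussian mechanism with scale $\sigma$ then gives
\[
P\bigl(\widehat\theta(A)+\sigma\zeta\in S\bigr)\le e^{\varepsilon}P\bigl(\widehat\theta(A')+\sigma\zeta\in S\bigr)+\delta .
\]
Here I use the usual $\varepsilon<1$ analysis, or a split of the budget.
\item If $\gamma(A)\le 0$ or $\gamma(A')\le 0$, the Lipschitz property gives $\gamma(A)\le 1$. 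Then $p(A)\le e^{\varepsilon(1-M)/2}=e^{-\log(1/\delta)}=\delta$, so the whole release branch at $A$ contributes at most $\delta$.
\end{itemize}

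\emph{Step 3: combining the pieces.} Multiply the bound from Step 1 for $p$ with the Gaussian bound from Step 2, and add the bound for the $\bot$ branch. This gives $P(\tilde\theta(A)\in S)\le e^{c\varepsilon}P(\tilde\theta(A')\in S)+C\delta$.

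The main obstacle is constant bookkeeping. A naive combination yields roughly $(3\varepsilon/2,\,2\delta)$ rather than $(\varepsilon,\delta)$. To recover the stated constants, I would follow the accounting in \citet{shen2026efficient}. The test step consumes $\varepsilon/2$ through the logistic slope. For the release step, either the Gaussian calibration is read with the budget split, or the statement is understood up to the constant-factor rescaling used in that reference. I would state explicitly which convention is adopted and verify that the $\delta$ contributions from the truncation jump, the small-$\gamma$ case and the Gaussian tail sum to at most $\delta$ under that convention.
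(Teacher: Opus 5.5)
Your Steps 1 and 2 are the natural PTR decomposition and are essentially sound, with two repairs. First, across the truncation at $2M$ you only get $p(A)\le(1+\delta)p(A')$, which is below $e^{\varepsilon/2}p(A')$ only when $\delta\le e^{\varepsilon/2}-1$. Second, the Gaussian step needs $\varepsilon<1$, which the statement does not assume.

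The genuine gap is Step 3. What your argument proves is roughly $(3\varepsilon/2,2\delta)$-DP. Deferring to ``the accounting in'' the ePTR reference, or to a ``constant-factor rescaling,'' is not an argument for the statement as written. In fact no bookkeeping can close it: with the constants fixed in Algorithm~\ref{alg:eptr}, the $(\varepsilon,\delta)$ conclusion is false.

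Here is a counterexample. Take $\mathcal A=\mathbb Z$ with $k\sim k\pm1$, $\widehat\theta(k)=k$, $\alpha=1$, $\mathcal G=\mathcal A$, and $\gamma(k)=[k+r]_+$, with $r$ chosen so that $\gamma(A)=M-7$ and $\gamma(A')=M-8$ for $A=A'+1$. Every hypothesis holds. For $\varepsilon=1/2$ and $\delta=10^{-5}$:
\begin{itemize}
\item $p(A)=(1+e^{7/4})^{-1}\approx0.148$ and $p(A')=(1+e^{2})^{-1}\approx0.119$;
\item the noise standard deviation is $2\sqrt{2\log(1.25\times10^{5})}\approx9.69$;
\item the half-line $S=\{y>\widehat\theta(A)+26.1\}$ gives $P(\widetilde\theta(A)\in S)\approx5.23\times10^{-4}$, while $e^{\varepsilon}P(\widetilde\theta(A')\in S)\approx5.07\times10^{-4}$.
\end{itemize}
The difference is about $1.6\times10^{-5}>\delta$.

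The obstruction is the one you sensed. In the logistic range the test already uses up to $\varepsilon/2$ of the log-ratio. The release branch therefore needs $p(A)\,H_{e^{t}}\le\delta$ with $t=\varepsilon-\log\{p(A)/p(A')\}<\varepsilon$, where $H_{e^t}$ is the hockey-stick divergence between the two Gaussians. At the stated noise scale this divergence decays only like $\delta^{(t/\varepsilon)^2}$ up to lower-order factors, whereas $p(A)$ is of constant order when $\gamma(A)$ is near $M$.

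So your outline should end in one of two honest conclusions.
\begin{enumerate}
\item State and check the weaker guarantee that Steps 1--3 actually deliver: about $(3\varepsilon/2,2\delta)$ for $\varepsilon<1$, with the truncation case handled additively.
\item Recalibrate the algorithm so that the release branch is an $(\varepsilon/2,\delta/2)$ Gaussian mechanism, i.e.\ noise scale $\frac{2\alpha}{\varepsilon}\sqrt{2\log(2.5/\delta)}$, and take $M=1+\frac{2}{\varepsilon}\log(2/\delta)$ so that the small-$\gamma$ and truncation terms are each at most $\delta/2$.
\end{enumerate}
With the recalibrated constants, your case analysis closes to exactly $(\varepsilon,\delta)$ for $\varepsilon<2$ and $\delta\le\varepsilon$. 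The test costs a factor $e^{\varepsilon/2}$, the Gaussian costs a factor $e^{\varepsilon/2}$ plus $\delta/2$, and the boundary cases contribute at most $\delta$ in total.
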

Theorem~\ref{thm:fPTR} shows that ePTR has a noise calibrated to the local sensitivity $\alpha$, given that the stability certificate $\gamma$ is well-defined. 
For spectral community detection, the derivation of $\alpha$ and $\gamma$ is nontrivial. 
We need $2\to\infty$ perturbation control on the empirical eigenspace $\widehat\Xi_K$ under a one-edge change of the adjacency matrix, which is a non-standard local perturbation analysis.

\section{Edge-Private Spectral Community Detection}\label{sec:dp}

\subsection{One-edge Perturbation and Stability Certificate}
\label{subsec:edge-certificate}
We first define the adjacency relationship for edge-DP, and then derive the global sensitivity and local sensitivity of the empirical eigenspace $\Xihat_K(A)$ in Algorithm~\ref{alg:spectralcluster}. 
For two symmetric adjacency matrices $A,A'\in\{0,1\}^{n\times n}$, define
\[
    d_E(A,A')=\sum_{1\le i<j\le n}1\{A_{ij}\neq A'_{ij}\}.
\]
We write $A\sim_E A'$ if $d_E(A,A')=1$. If an estimator $\tilde{\theta}$ satisfies that for any Borel set $\mathcal{B}$, 
$P(\tilde\theta(A) \in \mathcal{B}) \leq e^{\varepsilon}P(\tilde\theta(A') \in \mathcal{B}) + \delta$ when $A \sim_E A'$, 
then $\tilde\theta$ is called an $(\varepsilon, \delta)$-edge-differentially private (edge-DP) estimator. 

Now we examine the perturbation in the leading eigenvector matrix. 
Let $\Xihat_K(A) \in \reals^{n \times K}$ and $\Xihat_K(A') \in \reals^{n \times K}$ be the matrices consisting of the leading $K$ eigenvectors of $A$ and $A'$, correspondingly. The following proposition bounds both the spectral norm and the $2\to\infty$ norm under one-edge perturbation are controlled up to a rotation, given that $A$ is regular. 
\begin{pro}\label{pro:detailedlinifty}
Let $A,A'$ satisfy $A\sim_E A'$. Suppose there exist constants $a_0>0$, $A_0>0$, and $\theta_0\in(0,1]$ such that
\[
    \|A\|_{\infty}\le (1+a_0)n\theta_0^2,\quad
    \lambda_K(A)\ge a_0n\theta_0^2+3\sqrt 2,
    \quad
    \lambda_{K+1}(A)\le \frac45 a_0n\theta_0^2,\quad
    \|\Xihat_K(A)\|_{2,\infty}\le \frac{A_0}{\sqrt n}.
\]
Then there exists an orthogonal matrix $O\in\mathbb R^{K\times K}$ such that
\[
    \|\Xihat_K(A')-\Xihat_K(A)O\|
    \le
    \frac{5\sqrt 2 A_0}{a_0\theta_0^2 n\sqrt n}
    +
    \frac{50A_0^2}{a_0^2\theta_0^4 n^3},
\quad
    \|\Xihat_K(A')-\Xihat_K(A)O\|_{2,\infty}
    \le U_0,
\]
where
$U_0
=\frac{4\sqrt 2A_0}{a_0\theta_0^2 n\sqrt n}
+\frac{A_0}{a_0\theta_0^2 n\sqrt n}
+\frac{\sqrt 2A_0^2}{a_0\theta_0^2 n^2}
+\frac{5\sqrt 2A_0}{a_0^2\theta_0^4 n^2\sqrt n}
+\frac{50A_0^3}{a_0^2\theta_0^4 n^3\sqrt n}$.
\end{pro}

Proposition~\ref{pro:detailedlinifty} identifies a region on which $\hat{\Xi}_K$ is locally stable under one-edge perturbations. The required conditions have standard spectral interpretations: the maximum degree scale is controlled, the $K$th signal eigenvalue is separated from the noise eigenvalue $\lambda_{K+1}(A)$, and the leading empirical eigenspace is row-wise incoherent. The parameter $\theta_0$ denotes the network sparsity. These conditions motivate the regular set
\[
\begin{aligned}
\mathcal G_E=\Big\{A \in \{0, 1\}^{n \times n}:\;&
\|A\|_{\infty}\le (1+a_0)n\theta_0^2,\quad
\lambda_K(A)\ge a_0n\theta_0^2+3\sqrt 2,\\
&\lambda_{K+1}(A)\le \frac45 a_0n\theta_0^2,\quad
\|\Xihat_K(A)\|_{2,\infty}\le \frac{A_0}{\sqrt n}
\Big\}.
\end{aligned}
\]
On this regular set $\mathcal{G}_E \subset \mathcal{A}$, the local sensitivity level is $\alpha_E
    =
    \sqrt K\Bigl(
    \frac{5\sqrt 2 A_0}{a_0\theta_0^2 n\sqrt n}
    +
    \frac{50A_0^2}{a_0^2\theta_0^4 n^3}
    \Bigr)$ in Proposition \ref{pro:detailedlinifty}. This bound  improves the classical Davis-Kahan bound by an $1/\sqrt{n}$ order. Furthermore, we also establish the sensitivity level in terms of $2\to\infty$ norm, which is required in strong consistency.

The definition of $\mathcal{G}_E$ sheds light on the definition of the stability certificate $\gamma_E(A)$. Here, $\gamma_E(A)$  evaluates the stability of a dataset $A$, where a larger $\gamma_E(A)$ indicates that $A$ is more stable, and $\gamma_E(A) > 0$ indicates that $A \in \mathcal{G}_E$. 
We define $\gamma_E(A)$ via similar terms as $\mathcal{G}_E$ and re-scale these terms to guarantee that $\gamma_E(A)$ is 1-Lipschitz.
\begin{defn}
For any adjacency matrix $A \in \{0, 1\}^{n \times n}$, define the test score $\gamma(A)$ as 
\begin{align}
\gamma_E(A)
=\Bigg[
\min\Bigg\{&
\frac{(1+a_0)n\theta_0^2-\|A\|_\infty}{\sqrt 2},\;
\frac{\lambda_K(A)-a_0n\theta_0^2-3\sqrt 2}{\sqrt 2},\notag\\
&\frac{\frac45a_0n\theta_0^2-\lambda_{K+1}(A)}{\sqrt 2},\;
\frac{\frac{A_0}{\sqrt n}-\|\Xihat_K(A)\|_{2,\infty}}{U_0}
\Bigg\}
\Bigg]_+ .
\label{eqn:DPAgammacomplex}
\end{align}
\end{defn}

Each component of $\gamma_E$ is computable from $A$. The degree and eigenvalue terms are Lipschitz under one-edge changes by elementary norm bounds and Weyl's inequality. The incoherence term is controlled by the $2\to\infty$ perturbation bound in Proposition~\ref{pro:detailedlinifty}. Hence $\gamma_E$ is a valid sub-distance for $\mathcal G_E$, and $\{\gamma_E(A)>0\}\subset \mathcal G_E$.

\subsection{NetPTR for edge-private spectral clustering}
\label{subsec:edge-netptr}
We now propose our NetPTR approach, which proposes a private spectral community labels on social networks. We first release a private spectral space $\tilde{\Xi}_K(A)$ based on the stability certificate $\gamma$. 
It will be a perturbed estimate with noise calibrated by the local sensitivity level $\alpha_E$ in Proposition \ref{pro:detailedlinifty} for large $\gamma_E(A)$; otherwise it is a private release with $\gamma$-dependent probability to be a data-independent vector. 
Then we normalize the rows and apply $k$-means for the labels by Algorithm \ref{alg:spectralcluster}. The theoretical guarantee on privacy is in Theorem \ref{thm:DP}.

\begin{algorithm}[!t]
\caption{NetPTR for edge-private spectral community detection}
\label{alg:spectralDPclusterdetailed}
\begin{algorithmic}[1]
\Require Symmetric adjacency matrix $A$, number of communities $K$, parameters $A_0$, $a_0$ and $\theta_0$, certificate $\gamma_E$, privacy parameters $\varepsilon,\delta$
\Ensure A private label vector $\widetilde\ell$
\State Compute the eigenvalue decomposition $A=\Xihat\Lambdahat\Xihat^\top$.
 Let $\Xihat_K\in\mathbb R^{n\times K}$ collect the leading $K$ eigenvectors in magnitude.
\State Compute $\gamma_E(A)$ as \eqref{eqn:DPAgammacomplex} and $\alpha_E = \sqrt K\Bigl(
    \frac{5\sqrt 2 A_0}{a_0\theta_0^2 n\sqrt n}
    +
    \frac{50A_0^2}{a_0^2\theta_0^4 n^3}
    \Bigr)$.
\State Apply Algorithm~\ref{alg:eptr} to $\Xihat_K$ with sub-distance $\gamma_E$, local sensitivity $\alpha_E$, and privacy parameters $\varepsilon$ and $\delta/2$; denote the output by $\widetilde\Xi_K$.
\If{$\widetilde\Xi_K=\perp$}
    \State Output a fixed data-independent label vector $\widetilde\ell_0$.
\Else
    \State For each $i\in[n]$, set
        $\widetilde r_i=\frac{e_i^\top\widetilde\Xi_K}
        {\|e_i^\top\widetilde\Xi_K\|_2}$.
    \State Apply $K$-means to $\{\widetilde r_i:i\in[n]\}$ to obtain labels $\widetilde\ell$.
    \State Output $\widetilde\ell$.
\EndIf
\end{algorithmic}
\end{algorithm}

\begin{theorem}\label{thm:DP}
For any fixed $a_0>0$, $A_0>0$, and $\theta_0\in(0,1]$, Algorithm~\ref{alg:spectralDPclusterdetailed} is $(\varepsilon,\delta)$-edge-DP.
\end{theorem}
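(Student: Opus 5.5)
The plan is to reduce Theorem~\ref{thm:DP} to Theorem~\ref{thm:fPTR} and then use post-processing. Take $\mathcal A=\{0,1\}^{n\times n}$ (symmetric, zero diagonal) with the relation $\sim_E$, the estimator $\widehat\theta(A)=\mathrm{vec}(\Xihat_K(A))\in\mathbb R^{nK}$, the regular set $\mathcal G=\mathcal G_E$, sensitivity level $\alpha=\alpha_E$, and sub-distance $\gamma=\gamma_E$. Step~3 of Algorithm~\ref{alg:spectralDPclusterdetailed} runs Algorithm~\ref{alg:eptr} with parameters $(\varepsilon,\delta/2)$. Once the hypotheses of Theorem~\ref{thm:fPTR} are verified, $\widetilde\Xi_K$ is $(\varepsilon,\delta/2)$-edge-DP, and hence $(\varepsilon,\delta)$-edge-DP. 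The remaining steps are a deterministic map of $\widetilde\Xi_K$ alone: output $\widetilde\ell_0$ if $\perp$, otherwise row-normalize and run $K$-means with a data-independent tie-breaking or seed. By the post-processing property, $\widetilde\ell$ is then $(\varepsilon,\delta)$-edge-DP.

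Eigenvectors are defined only up to sign or rotation, so I would first fix a deterministic choice of $\Xihat_K(A)$. The sensitivity hypothesis needs $\|\Xihat_K(A)-\Xihat_K(A')\|_F\le\alpha_E$, whereas Proposition~\ref{pro:detailedlinifty} only controls $\|\Xihat_K(A')-\Xihat_K(A)O\|$. I would handle this in one of two ways. The first is to note that the downstream output (row norms and $K$-means) is invariant under right-multiplication by orthogonal $O$, and that the Gaussian noise $\zeta$ has a rotation-invariant law. One can then compare the release on $A'$ with the release based on $\Xihat_K(A)O$, which has the same distribution after post-processing. The second is simply to interpret $\widehat\theta$ as the equivalence class and assume the proposition's alignment. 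Given the alignment, $\|M\|_F\le\sqrt K\|M\|$ for an $n\times K$ matrix turns the spectral bound into exactly $\alpha_E$ for every $A\in\mathcal G_E$ and $A'\sim_E A$.

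Next I would verify the two conditions on $\gamma_E$. Nonnegativity holds by the $[\cdot]_+$. If $\gamma_E(A)>0$, every term inside the min is positive, and these are exactly the four inequalities defining $\mathcal G_E$ (strict, hence non-strict), so $A\in\mathcal G_E$. For the $1$-Lipschitz property, $[\cdot]_+$ and $\min$ preserve Lipschitz constants, so it suffices to show each term changes by at most $1$ under one-edge flips. Flipping $A_{ij}$ changes two row sums by one each, so $\big|\|A\|_\infty-\|A'\|_\infty\big|\le1\le\sqrt2$. Weyl's inequality with $\|A-A'\|=\|E_{ij}+E_{ji}\|=1\le\sqrt2$ bounds the changes in $\lambda_K$ and $\lambda_{K+1}$. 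Any ordering-by-magnitude subtlety can be handled by applying Weyl to the sorted absolute values, or else bounded by $\|A-A'\|_F=\sqrt2$, which explains the $\sqrt2$ normalization.

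The main obstacle is the incoherence term $(A_0/\sqrt n-\|\Xihat_K(A)\|_{2,\infty})/U_0$, which is only locally Lipschitz. I would argue as follows. If both $A,A'$ have $\gamma_E=0$ there is nothing to prove. Otherwise, by symmetry, assume $\gamma_E(A)>0$, so $A\in\mathcal G_E$. Proposition~\ref{pro:detailedlinifty} then gives an orthogonal $O$ with $\|\Xihat_K(A')-\Xihat_K(A)O\|_{2,\infty}\le U_0$. Since $\|\cdot\|_{2,\infty}$ is invariant under right orthogonal multiplication, the triangle inequality gives $\big|\|\Xihat_K(A')\|_{2,\infty}-\|\Xihat_K(A)\|_{2,\infty}\big|\le U_0$, so this term moves by at most $1$. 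Combining this with the other three terms, $|\gamma_E(A)-\gamma_E(A')|\le1$ in all cases. A case where one side's min is attained by a different term is handled by the standard fact that a min of functions, each $1$-Lipschitz along this edge, is $1$-Lipschitz. Here I would take some care, since the Lipschitz bound for the incoherence term is only established when one endpoint lies in $\mathcal G_E$. This suffices, because when neither endpoint has positive inner min, both values are $0$.
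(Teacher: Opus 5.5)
Your proposal is correct and takes the same route as the paper. You invoke Theorem~\ref{thm:fPTR} with $\mathcal G_E$ and with $\alpha_E$, obtained from Proposition~\ref{pro:detailedlinifty} together with $\|M\|_F\le\sqrt K\|M\|$. You check $\gamma_E$ using row-sum and Weyl bounds for the degree and eigenvalue terms and the $2\to\infty$ bound of Proposition~\ref{pro:detailedlinifty} for the incoherence term, then finish by post-processing. Your explicit treatment of the rotation $O$ adds a detail the paper's sketch leaves implicit: you use the rotation invariance of the Gaussian noise and of row-normalization plus $K$-means. It becomes fully rigorous if you apply Theorem~\ref{thm:fPTR} to the two-point space $\{A,A'\}$ with $\widehat\theta(A)=\widehat\Xi_K(A)O$, which is legitimate because differential privacy is a pairwise condition.
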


%The proof follows directly from Theorem~\ref{thm:fPTR}. Proposition~\ref{pro:detailedlinifty} verifies the local sensitivity condition on $\mathcal G_E$, and the construction of $\gamma_E$ verifies the sub-distance condition. The row normalization and $K$-means steps are post-processing and therefore do not affect privacy.

It remains to discuss the choice of the density-scale parameter $\theta_0$.
Algorithm~\ref{alg:spectralDPclusterdetailed} uses $\theta_0$ as a density-scale parameter in the stability certificate. In some applications, $\theta_0$ may be chosen from public information or domain knowledge. When this is not available, we propose a private estimate from the maximum degree. Since $\|A\|_\infty$ changes by at most one under edge adjacency, the Laplace mechanism gives
\[
    \widetilde\theta_0(A)
    =
    \left[
    \frac{\|A\|_\infty+\eta}{n}
    \right]_+^{1/2},
    \qquad
    \eta\sim \mathrm{Lap}(1/\varepsilon_1),
\]
as an $(\varepsilon_1,0)$-DP estimate of the network density scale. Running Algorithm~\ref{alg:spectralDPclusterdetailed} with $\widetilde\theta_0(A)$ in place of $\theta_0$ then gives the following composition guarantee.

\begin{cor}\label{cor:private-theta0-edge}
For any fixed $a_0>0$ and $A_0>0$, Algorithm~\ref{alg:spectralDPclusterdetailed} with $\widetilde\theta_0(A)$ is $(\varepsilon+\varepsilon_1,\delta)$ edge differentially private.
\end{cor}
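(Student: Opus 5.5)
The plan is to view the modified procedure as an adaptive composition of two mechanisms and then invoke basic composition. The first mechanism is $\mathcal M_1(A)=\widetilde\theta_0(A)$. The second is $\mathcal M_2(A;t)$, which runs Algorithm~\ref{alg:spectralDPclusterdetailed} on $A$ with the density scale fixed at $\theta_0=t$. The final output is $\mathcal M_2(A;\mathcal M_1(A))$, and the task is to show it is $(\varepsilon+\varepsilon_1,\delta)$-edge-DP.

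\textbf{Step 1: privacy of $\mathcal M_1$.} Under $A\sim_E A'$, exactly one symmetric pair of entries changes, so each row sum changes by at most one. Hence $|\|A\|_\infty-\|A'\|_\infty|\le 1$. The Laplace mechanism $\|A\|_\infty+\eta$ with $\eta\sim\mathrm{Lap}(1/\varepsilon_1)$ is therefore $(\varepsilon_1,0)$-DP. The map $x\mapsto[x/n]_+^{1/2}$ is post-processing, so $\mathcal M_1$ is $(\varepsilon_1,0)$-edge-DP.

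\textbf{Step 2: privacy of $\mathcal M_2$ for a fixed $t$.} Fix any value $t$. If $t\in(0,1]$, Theorem~\ref{thm:DP} applies directly, because it holds for every fixed $\theta_0\in(0,1]$, $a_0$ and $A_0$. So $\mathcal M_2(\cdot;t)$ is $(\varepsilon,\delta)$-edge-DP, uniformly in $t$. The other values of $t$ need separate treatment:
\begin{itemize}
\item If $t=0$, I would define the algorithm to output $\widetilde\ell_0$. Alternatively, note that the degree condition then forces $\gamma_E\equiv 0$, so the output is $\perp$ with a data-independent probability and the mechanism remains private.
\item If $t>1$, I would either clip $\widetilde\theta_0$ to $(0,1]$, which is post-processing, or observe that the proof of Theorem~\ref{thm:DP} never uses $\theta_0\le 1$. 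Its ingredients are the $1$-Lipschitz property of $\gamma_E$, the containment $\{\gamma_E>0\}\subset\mathcal G_E$, and Proposition~\ref{pro:detailedlinifty}, and none of these needs the constraint in an essential way.
\end{itemize}
I expect this to be the only mildly delicate point, and I would resolve it by clipping.

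\textbf{Step 3: adaptive composition.} For neighbors $A\sim_E A'$ and any measurable set $S$, condition on the released value $t$:
\[
P(\mathcal M_2(A;\mathcal M_1(A))\in S)=\int P(\mathcal M_2(A;t)\in S)\,dP_{\mathcal M_1(A)}(t).
\]
By Step 2, $P(\mathcal M_2(A;t)\in S)\le \min\{1,\,e^{\varepsilon}P(\mathcal M_2(A';t)\in S)+\delta\}$. Let $f(t)=P(\mathcal M_2(A';t)\in S)$. Then the integral is at most
\[
\int \min\{1,e^\varepsilon f(t)\}\,dP_{\mathcal M_1(A)}(t)+\delta .
\]
The integrand $\min\{1,e^\varepsilon f\}$ takes values in $[0,1]$. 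Since $\mathcal M_1$ is pure $\varepsilon_1$-DP, the densities satisfy $dP_{\mathcal M_1(A)}\le e^{\varepsilon_1}dP_{\mathcal M_1(A')}$. The integral is therefore at most $e^{\varepsilon_1}\int e^{\varepsilon}f\,dP_{\mathcal M_1(A')}=e^{\varepsilon+\varepsilon_1}P(\text{output}(A')\in S)$.

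Putting the two bounds together gives the $(\varepsilon+\varepsilon_1,\delta)$ guarantee, which is the standard basic composition theorem for adaptive mechanisms.
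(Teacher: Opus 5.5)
Your proposal is correct and follows the paper's argument. Both use the sensitivity-one bound on the maximum degree, then the Laplace mechanism with post-processing for $\widetilde\theta_0$, then composition with Theorem~\ref{thm:DP} applied at each fixed value of $\theta_0$. Where the paper cites a composition theorem, you prove the adaptive composition step directly, and you also handle the boundary values $\widetilde\theta_0=0$ and $\widetilde\theta_0>1$, which the paper's one-line proof leaves implicit.
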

\begin{proof}
The verification is simple, as $|\max_{i}d_i(A)-\max_{i}d_i(A')|\leq 1$, when $d_E(A, A')\leq 1$. By the Laplace mechanism, this is $(\varepsilon_1,0)$-DP. The composition theorem in supplementary materials indicates that $\lhat_{\thetatilde_0}(A)$ is an $(\varepsilon+\varepsilon_1,\delta)$-DP  private estimator. 
\end{proof}

% \subsection{Extension to downstream estimation}
% The $(\varepsilon, \delta)$-DP estimator $\ltilde_{\gamma_E}$ can be used in further estimation problems. For example, the estimation of degree heterogeneity parameters $\theta_i$. 

% Let's start with the non-private estimators $\thetahat$. Let $\mathcal{I}_k(\ell):= \{i \in \mathcal{V}, \ell(i) = k\}$, the set of nodes in community $k$. For node $i$ with $\ell(i) = k$, the degree heterogeneity parameter can be estimated by 
% \[
% \thetahat_i(A,\ell)=\frac{\sum_{j\in \mathcal{I}_{k}(\ell)}A(i,j)}{\sqrt{\sum_{i,j\in \mathcal{I}_{k}(\ell)}A(i,j)}}
% \]
% Suppose $\ell$ is given, it is not difficult to see that 
% \[
% |\thetahat_i(A,\ell)-\thetahat_i(A',\ell)|\leq \frac{2}{\rho n}. 
% \]
% With the Laplace principle, we consider using 
% \[
% \thetatilde_i=\thetahat_i(A,\ltilde)+\frac{2}{n\varepsilon}z,\quad z\sim Lap(1). 
% \]
% \begin{theorem}
% There is a fixed constant $c_0 > 0$. For any $\varepsilon \geq \frac{1}{c_0 n}$, the $\gamma_E$-masked spectral clustering label $\thetatilde$ is $(2\varepsilon,\delta)$-DP, and 
% \[
% \E[|\thetatilde_i-\theta_i|^2]\lesssim \frac{1}{n}+\frac{1}{n^2\varepsilon^2}.
% \]
% Meanwhile, for any $(\varepsilon,\delta)$ estimator $\zeta$ of $\theta_i$ satisfies
% \[
% \E[|\thetatilde_i-\theta_i|]\gtrsim \frac{1}{\sqrt{n}}+\frac{1}{n\varepsilon}.
% \]
% \end{theorem}

\section{Rate Optimality of NetPTR under the DCSBM}
\label{sec:optimaldp}

\subsection{Upper Bound and Consistency}
We now study the error bound of NetPTR. The privacy guarantee in Theorem~\ref{thm:DP} is deterministic and holds for every input graph, whereas error analysis requires a distributional model. We consider the degree-corrected stochastic blockmodel (DCSBM) in the following analysis. We are interested in the statistical error in the non-private estimator, and the additional error introduced to protect privacy. The goal is to minimize the additional error while the privacy is still well-protected. 

We consider a regular DCSBM class, which is commonly used in social network literature \citep{DCSBM, SCORE, hu2024network, shen2025optimal}. 
The communities are non-vanishing, degree parameters comparable to a common scale $\theta_0$, and the block connectivity matrix full rank with bounded spectrum. These conditions ensure that the population eigenspace identifies the community structure and that the non-private estimator enjoys statistical consistency. 

\begin{defn}
Let $\mathbf{1}_n$ denote the $n$-dimensional vector of ones and $e_k$ the $k$th coordinate vector in $\reals^K$. Define the regular label set with respect to a constant $0 < \rho < 1$, 
\[
\mathcal{L}_\rho
:=
\Bigl\{
\Pi\in\{0,1\}^{n\times K}:
\Pi \mathbf{1}_K=\mathbf{1}_n,\;
\mathbf{1}_n\ic \Pi e_k\ge \rho n,\ \forall k\in[K]
\Bigr\}.
\]
Thus each community contains at least $\rho n$ nodes.
\end{defn}

For fixed constants $C_0>0$ and $\rho\in(0,1]$, define the regular population class
\[
\calM_{\theta_0,C_0,\rho}
=
\Bigl\{
\Omega=\Theta \Pi P \Pi\ic \Theta:
\Pi\in \mathcal{L}_{\rho},\ 
\frac{\theta_0}{C_0}<\theta_i<C_0\theta_0,C_0^{-1} < \lambda_K(P) < \lambda_1(P) < C_0, 
\|\Omega\|_\infty\le n\theta_0^2
\Bigr\}.
\]
$\calM_{\theta_0,C_0,\rho}$ contains the population matrices with nonvanishing communities and node-specific degree parameters on a common scale. Hence, it naturally guarantees that the observed network $A$ falls into $\mathcal{G}_E$ with high probability. It is not surprising. To guarantee the consistency of spectral community detection methods, it is indirectly required that the dataset has a stable eigenspace with high probability. 

The following theorem gives the (strong) consistency of NetPTR under this class.
\begin{thm}
\label{thm:optimal}
Fix $C_0>0$, $\rho\in(0,1]$, and integer $K\ge 2$. There exist constants
$a_0>0$ and $A_0>0$, depending only on $(C_0,\rho,K)$, such that the following holds. Let $A$ be generated from the DCSBM with $\Omega\in\calM_{\theta_0,C_0,\rho}$ and $n\theta_0^2>C\log n$. Let $\ltilde_{\gamma_E}$ be the differentially private estimator defined in Algorithm~\ref{alg:spectralDPclusterdetailed}, based on the certificate $\gamma_E$ in \eqref{eqn:DPAgammacomplex}. If
\[
\varepsilon\ge \frac{\log n}{a_0 \theta_0^2 n},
\qquad
\frac{1}{n}\ge \delta\ge \exp\left(-\frac{a_0\theta_0^2 n\varepsilon}{\log n}\right),
\]
then with probability at least $1-O(1/n)$, there exists a permutation $\pi$ of $[K]$ such that
\[
\frac{1}{n}\bigl|\{i:\ltilde_{\gamma_E}(i)\neq \pi\circ \ell(i)\}\bigr|
\le
C\left(
\frac{K}{\theta_0^2 n}
+
\frac{K^2\log(2.5/\delta)}
{\varepsilon^2\theta_0^4 n^2}
\right).
\]
Moreover, if $\theta_0^2\sqrt n\ge 1/a_0$, then $\ltilde_{\gamma_E}=\pi\circ \ell$ with probability at least $1-O(1/n)$.
\end{thm}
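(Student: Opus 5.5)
The argument has three steps. First, show that with high probability the certificate $\gamma_E(A)$ is so large that the test is passed with probability one. Second, control the injected Gaussian noise. Third, transfer the resulting perturbation of the embedding into a misclustering bound by the standard row-normalized $K$-means argument.

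\textbf{Step 1: typical DCSBM graphs are deep inside $\mathcal G_E$.} For $\Omega\in\calM_{\theta_0,C_0,\rho}$ and $n\theta_0^2>C\log n$, we collect four standard facts, each holding with probability $1-O(1/n)$.
\begin{itemize}
\item Bernstein's inequality and a union bound give $\|A\|_\infty\le n\theta_0^2+C\sqrt{n\theta_0^2\log n}$.
\item Matrix concentration gives $\|A-\Omega\|\le C\sqrt{n\theta_0^2}$.
\item From the regular class, $\lambda_K(\Omega)\ge c_1 n\theta_0^2$ with $c_1=c_1(C_0,\rho,K)$, while $\lambda_{K+1}(\Omega)=0$.
\item The population incoherence $\|\Xi_K\|_{2,\infty}\le C_2/\sqrt n$ follows from $e_i^\top\Xi_K=\theta_ie_k^\top H$. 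An entrywise eigenvector bound, via leave-one-out in the style of \citet{fan2018ell_}, then gives $\|\Xihat_K\|_{2,\infty}\le 2C_2/\sqrt n$.
\end{itemize}
By Weyl's inequality, $\lambda_K(A)\ge c_1n\theta_0^2-C\sqrt{n\theta_0^2}$ and $\lambda_{K+1}(A)\le C\sqrt{n\theta_0^2}$.

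We then choose constants so that every term in \eqref{eqn:DPAgammacomplex} is of order $n\theta_0^2$ or larger. Take $a_0=c_1/2$. Then the first three terms are at least $c\,a_0n\theta_0^2$, since the slack between $n\theta_0^2$ and $(1+a_0)n\theta_0^2$ dominates the fluctuation, and likewise for the eigenvalue gaps. Take $A_0=4C_2$. Then the fourth term is at least $\frac{A_0/(2\sqrt n)}{U_0}$. Because $U_0\lesssim A_0/(a_0\theta_0^2n^{3/2})$ up to lower-order terms, this is at least $c\,a_0\theta_0^2 n$. Hence $\gamma_E(A)\ge c\,a_0\theta_0^2n$, where the constant can be absorbed into $a_0$.

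The privacy conditions give $\log(1/\delta)\le a_0\theta_0^2n\varepsilon/\log n$. Therefore
\[
2M=2+\tfrac{4}{\varepsilon}\log(2/\delta)\lesssim a_0\theta_0^2n/\log n+ \log n/\varepsilon\lesssim a_0\theta_0^2 n.
\]
Here $\varepsilon\ge \log n/(a_0\theta_0^2n)$ controls the term $\log(2)/\varepsilon$. Adjusting constants gives $\gamma_E(A)>2M$, so $p(A)=1$ and NetPTR releases
\[
\widetilde\Xi_K=\Xihat_K+\sigma Z,\qquad \sigma=\tfrac{\alpha_E}{\varepsilon}\sqrt{2\log(2.5/\delta)}.
\]

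\textbf{Step 2: the noise.} $Z$ is an $n\times K$ Gaussian matrix, and $\alpha_E\asymp \sqrt K A_0/(a_0\theta_0^2n^{3/2})$. We need two bounds.
\begin{itemize}
\item Frobenius norm: $\|\sigma Z\|_F^2\le C\sigma^2 nK$ with high probability, which gives $\|\sigma Z\|_F^2\lesssim K^2\log(2.5/\delta)/(\varepsilon^2\theta_0^4n^2)$.
\item Row norms: $\|\sigma Z\|_{2,\infty}\le C\sigma\sqrt{K+\log n}$.
\end{itemize}

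\textbf{Step 3: clustering error.} Davis--Kahan with the bound on $\|A-\Omega\|$ gives an orthogonal $O$ with $\|\Xihat_K-\Xi_KO\|_F^2\lesssim K/(n\theta_0^2)$. All rows of $\Xi_K$ have norm of order $1/\sqrt n$, and after row normalization the $K$ community centers are separated by a constant. The standard argument of \citet{lei2015consistency} for row-normalized $K$-means then bounds the number of misclustered nodes by
\[
C\,n\,\|\widetilde\Xi_K-\Xi_KO\|_F^2\le C\,n\bigl(\|\Xihat_K-\Xi_KO\|_F^2+\|\sigma Z\|_F^2\bigr).
\]
Dividing by $n$ yields the stated rate.

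For exact recovery it suffices to show $\max_i\|e_i^\top(\widetilde\Xi_K-\Xi_KO)\|\le c/\sqrt n$ for a small constant $c$. The non-private part is $o(1/\sqrt n)$ by the same $2\to\infty$ bound used in Step 1. For the noise part, $\sigma\sqrt{\log n}\lesssim\sqrt{K\log n\log(1/\delta)}/(\varepsilon a_0\theta_0^2n^{3/2})$. Using $\varepsilon\ge\log n/(a_0\theta_0^2n)$ and $\log(1/\delta)\le a_0\theta_0^2 n\varepsilon/\log n$, this is at most $C/(\sqrt{\varepsilon a_0\theta_0^2 n}\,\sqrt n)$, which is $o(1/\sqrt n)$ once $\varepsilon a_0\theta_0^2n\gg1$. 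This is where the condition $\theta_0^2\sqrt n\ge1/a_0$ together with the lower bound on $\varepsilon$ is used. When every row lies within a small multiple of the center separation of its own center, $K$-means recovers $\pi\circ\ell$ exactly.

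\textbf{Main obstacle.} The hardest step is Step 1's incoherence bound $\|\Xihat_K(A)\|_{2,\infty}\le A_0/(2\sqrt n)$ on the empirical eigenvectors, combined with calibrating $a_0$ and $A_0$ so that $\gamma_E$ exceeds $2M$ at the rate $a_0\theta_0^2 n$. I would first try to obtain the entrywise bound by invoking an existing $2\to\infty$ perturbation theorem \citep{fan2018ell_} under $n\theta_0^2\gtrsim\log n$, and only fall back on a leave-one-out argument if that theorem's conditions fail.
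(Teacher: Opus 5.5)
Your architecture matches the paper's decomposition: show $\gamma_E(A)>2M$ with high probability so that $p(A)=1$, bound the Gaussian noise in Frobenius and row norms, then apply Davis--Kahan and the row-normalized $K$-means argument of \citet{lei2015consistency}. This gives the misclustering rate. The gap is in the exact-recovery step.

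There you need $\max_i\|e_i^\top(\widehat\Xi_K-\Xi_KO)\|\le c/\sqrt n$, with $c$ small compared with $\min_i\sqrt n\,\|e_i^\top\Xi_K\|\ge C_0^{-2}$, and you appeal to ``the same $2\to\infty$ bound used in Step 1''. But Step 1 only gives $\|\widehat\Xi_K\|_{2,\infty}\le 2C_2/\sqrt n$. That is a row error of size $C_2/\sqrt n$ with $C_2\ge\sqrt n\|\Xi_K\|_{2,\infty}\ge C_0^{-2}$, which is useless here.

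Your own noise computation already shows $\|\sigma Z\|_{2,\infty}\lesssim\sqrt{K(K+\log n)}/(\sqrt n\log n)$ using only the conditions on $\varepsilon$ and $\delta$. So the hypothesis $\theta_0^2\sqrt n\ge 1/a_0$ is not used where you say it is. It is the signal condition for the non-private row-wise error. For instance, expanding $\widehat\Xi_K=A\widehat\Xi_K\widehat\Lambda_K^{-1}$ with $A=\Omega+(A-\Omega)$ and $\widehat\Xi_K=\Xi_KO+(\widehat\Xi_K-\Xi_KO)$ gives
\[
\|\widehat\Xi_K-\Xi_KO\|_{2,\infty}
\lesssim\sqrt{\frac{\log n}{n\theta_0^2}}\,\frac{1}{\sqrt n}
+\frac{\max_i\|e_i^\top(A-\Omega)\|\,\|\widehat\Xi_K-\Xi_KO\|}{\lambda_K(A)}
\lesssim\sqrt{\frac{\log n}{n\theta_0^2}}\,\frac{1}{\sqrt n}+\frac{1}{n\theta_0^2}.
\]
The first term is a small multiple of $n^{-1/2}$ once $C$ in $n\theta_0^2>C\log n$ is large. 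The second is $O(a_0/\sqrt n)$ as soon as $\theta_0^2\sqrt n\ge 1/a_0$, so take $a_0$ small against $C_0^{-2}$. Alternatively, state your leave-one-out bound in relative form, $\|\widehat\Xi_K-\Xi_KO\|_{2,\infty}\le\eta_n\|\Xi_K\|_{2,\infty}$ with $\eta_n$ small. As written, the proposal contains neither.

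Two smaller points about Step 1. First, Algorithm~\ref{alg:eptr} is run at level $\delta/2$, so the two conditions give $2M=2+\frac4\varepsilon\log\frac2\delta\le 2+4(1+\log 2)\,a_0\theta_0^2n/\log n$. Meanwhile $\gamma_E(A)\ge c\,a_0\theta_0^2n$, with $c$ a numerical constant fixed by the $\sqrt2$ and $U_0$ normalizations. Both bounds are linear in $a_0$, so ``adjusting constants'' cannot close the comparison. It is the $1/\log n$ factor that yields $\gamma_E(A)>2M$ for large $n$, and your chain ending in $\lesssim a_0\theta_0^2n$ (with the stray $\log n/\varepsilon$) discards it.

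Second, do not count on \citet{fan2018ell_} for the incoherence bound. Its eigengap condition is measured against $\|A-\Omega\|_\infty$, which here is of order $n\theta_0^2$, the same order as $\lambda_K(\Omega)$. The elementary expansion above gives only $\|\widehat\Xi_K\|_{2,\infty}\lesssim n^{-1/2}+(n\theta_0^2)^{-1}$. So in the range $\log n\lesssim n\theta_0^2\ll\sqrt n$ the leave-one-out argument is genuinely required.
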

Theorem~\ref{thm:optimal} gives both weak and strong consistency guarantees. The weak consistency is guaranteed when the misclustering rate goes to 0. The strong consistency is guaranteed under the additional signal condition $\theta_0^2\sqrt n\ge 1/a_0$, i.e. the average degree is at $\sqrt{n}$. The edge-flipping method needs a degree scale to be linearly dependent on $n$ for strong consistency, which is much stronger \citep{hehir2022consistent}.

The two terms in the upper bound have distinct origins. The first term is the non-private spectral clustering error under the DCSBM, which goes to 0 when the degree scale $n\theta_0^2 \to \infty$. The second term is the additional error induced by the Gaussian perturbation in NetPTR. It decreases as the privacy budget $\varepsilon$ increases and as the degree scale increases. When the privacy budget $\varepsilon \gtrsim \frac{\log n}{n\theta_0^2}$, then the consistency is guaranteed. Thus the theorem gives an explicit privacy--accuracy decomposition.

The same accuracy conclusion continues to hold when the density-scale parameter $\theta_0$ is estimated privately. The next corollary combines the private estimation of $\theta_0$ with the NetPTR release through composition.

\begin{cor}
\label{cor:optimal-private-theta0}
Under the assumptions of Theorem~\ref{thm:optimal}, suppose that
    $\varepsilon_1\ge {\log n}/{(20\theta_0^2 n)}$
and $n\theta_0^2>C\log n$ for a sufficiently large constant $C>0$. Then, with probability at least $1-O(1/n)$, there exists a permutation $\pi$ of $[K]$ such that the release $\ltilde_{\gamma_E}$ by Algorithm~\ref{alg:spectralDPclusterdetailed}, using the private estimate $\thetatilde_0$, satisfies
\[
\frac{1}{n}
\left|\{i:\ltilde_{\gamma_E}(i)\neq \pi\circ \ell(i)\}\right|
\lesssim
\frac{1}{\theta_0^2 n}
+
\frac{\log(2.5/\delta)}
{\varepsilon^2\theta_0^4 n^2}.
\]
Moreover, if $\theta_0^2\sqrt n\ge 1/a_0$, then $\ltilde_{\gamma_E}=\pi\circ \ell$ with probability at least $1-O(1/n)$.
\end{cor}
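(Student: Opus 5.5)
The approach is to reduce Corollary~\ref{cor:optimal-private-theta0} to Theorem~\ref{thm:optimal}. We show that, with probability $1-O(1/n)$, the private density estimate $\thetatilde_0$ falls in a window $[c_1\theta_0,c_2\theta_0]$ with absolute constants $c_1,c_2$. On that event, $\thetatilde_0$ can serve as a legitimate density scale, up to adjusting $(C_0,a_0,A_0)$, and the conclusions of Theorem~\ref{thm:optimal} apply. The randomness of $\eta$ is independent of $A$ and of the Gaussian noise in ePTR, so we may condition on $\eta$.

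\textbf{Step 1: concentration of $\thetatilde_0$.} Under $\Omega\in\calM_{\theta_0,C_0,\rho}$, every expected degree satisfies $\sum_j\Omega_{ij}\le n\theta_0^2$. It is also bounded below by $c\,n\theta_0^2$, using $\theta_i>\theta_0/C_0$, $\lambda_K(P)>C_0^{-1}$ (which forces the diagonal of $P$, and hence each row's block mass, to be bounded below), and community sizes $\ge\rho n$. Since $n\theta_0^2>C\log n$, Bernstein's inequality and a union bound give $\|A\|_\infty\in[c\,n\theta_0^2,\,2n\theta_0^2]$ with probability $1-O(1/n)$. For the Laplace noise, $P(|\eta|>t)=e^{-\varepsilon_1 t}$. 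Taking $t=\log n/\varepsilon_1\le 20\theta_0^2 n$ makes this probability $1/n$, but this only bounds $|\eta|$ by a constant multiple of $n\theta_0^2$, which is not small relative to the lower bound $c\,n\theta_0^2$.

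This is the main obstacle: the constant $20$ in the assumption on $\varepsilon_1$ must be matched against the lower degree constant $c$. I would first try to absorb this by redefining $\thetatilde_0$ conservatively, or by observing that the upper bound $\|A\|_\infty+|\eta|\le 2.05\,n\theta_0^2$ is what the certificate actually needs. The plan is to track which side of the window each condition in $\mathcal G_E$ uses. The condition $\|A\|_\infty\le(1+a_0)n\tilde\theta_0^2$ needs $\tilde\theta_0$ not too small. The eigenvalue conditions $\lambda_K(A)\ge a_0 n\tilde\theta_0^2+3\sqrt2$ and $\lambda_{K+1}(A)\le \tfrac45 a_0 n\tilde\theta_0^2$ need $\tilde\theta_0$ of the right order. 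If necessary, I would use $\eta$'s sharper tail, $|\eta|\le n\theta_0^2/20\cdot\log$-type bounds, or enlarge $C$ so that $\varepsilon_1 n\theta_0^2\ge \log n/20$ gives $|\eta|\le\tfrac12 c\,n\theta_0^2$ with probability $1-O(1/n)$. This would yield $\thetatilde_0\asymp\theta_0$.

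\textbf{Step 2: certificate margin with $\thetatilde_0$.} On the event of Step~1, re-run the high-probability part of the proof of Theorem~\ref{thm:optimal}, choosing $a_0,A_0$ depending on $(C_0,\rho,K,c_1,c_2)$. We need $\gamma_E(A)$ computed with $\thetatilde_0$ to exceed $2M$ with $M=1+\frac2\varepsilon\log(2/\delta)$. The eigenvalue margins are of order $n\theta_0^2$ by Weyl's inequality and $\|A-\Omega\|\lesssim\sqrt{n}\theta_0$. The incoherence margin is of order $A_0/(\sqrt n\,U_0)\asymp n\theta_0^2$. The assumptions $\varepsilon\ge\log n/(a_0\theta_0^2 n)$ and $\delta\ge\exp(-a_0\theta_0^2 n\varepsilon/\log n)$ then give $2M\lesssim n\theta_0^2/\log n$, which is below the margin. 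Hence $p(A)=1$, and the release is $\Xihat_K+\frac{\alpha_E(\thetatilde_0)}{\varepsilon}\sqrt{2\log(2.5/\delta)}\,\zeta$ with $\alpha_E(\thetatilde_0)\asymp\alpha_E(\theta_0)$.

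\textbf{Step 3: error bound.} The remaining argument is identical to Theorem~\ref{thm:optimal}: combine Davis--Kahan/$2\to\infty$ bounds for $\Xihat_K$ with the Gaussian noise bound $\|\zeta\|\lesssim\sqrt{nK}$ and the $K$-means perturbation lemma. Since $\alpha_E$ changes only by constants, this gives the stated rate $1/(\theta_0^2n)+\log(2.5/\delta)/(\varepsilon^2\theta_0^4n^2)$, and exact recovery when $\theta_0^2\sqrt n\ge 1/a_0$. Finally, a union bound over the $O(1/n)$ failure events of Steps~1--3 completes the proof.
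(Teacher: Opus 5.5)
Your overall reduction is the right one, and Steps~2--3 are fine at sketch level once Step~1 holds: make $\widetilde\theta_0$ a valid density scale on an event of probability $1-O(1/n)$, then rerun the high-probability part of the proof of Theorem~\ref{thm:optimal}. The gap is Step~1. You correctly notice that $\log n/\varepsilon_1\le 20n\theta_0^2$ is useless, but none of your proposed remedies works.

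Since $n\widetilde\theta_0^2=\|A\|_\infty+\eta$, the first coordinate of $\gamma_E$ equals $\{a_0\|A\|_\infty+(1+a_0)\eta\}/\sqrt2$. So you need the \emph{relative} bound $\eta\ge-\frac{a_0}{1+a_0}\|A\|_\infty+2\sqrt2M$, not merely $\widetilde\theta_0\asymp\theta_0$ with arbitrary constants $c_1,c_2$. Retuning $a_0$ cannot absorb a wide window, for two reasons. First, this tolerance is below $\|A\|_\infty$ for every $a_0$. Second, the condition $a_0n\widetilde\theta_0^2+3\sqrt2\le\lambda_K(A)$ pushes $a_0$ down, not up.

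The other remedies fail as well:
\begin{itemize}
\item The Laplace tail is exactly $P(|\eta|>t)=e^{-\varepsilon_1t}$, so there is no sharper tail to exploit.
\item Enlarging $C$ is irrelevant. The probability $P(|\eta|>s\,n\theta_0^2)=\exp(-s\,\varepsilon_1n\theta_0^2)$ depends only on $\varepsilon_1n\theta_0^2$. At the boundary it equals $n^{-s/20}$, which is $O(1/n)$ only for $s\ge20$.
\item Redefining $\widetilde\theta_0$ changes the algorithm the corollary is about.
\end{itemize}

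This is not a weakness of your method: with the hypothesis exactly as written, the conclusion is false. Take $\varepsilon_1=\log n/(20\theta_0^2n)$. Since $\|A\|_\infty\le2n\theta_0^2$ with probability $1-O(1/n)$ and $\eta$ is independent of $A$,
\[
P(\eta\le-\|A\|_\infty)\ge\tfrac12n^{-1/10}-O(1/n).
\]
On this event the following happens:
\begin{itemize}
\item $\widetilde\theta_0=0$, and the first coordinate of $\gamma_E$ is $-\|A\|_\infty/\sqrt2<0$. The coordinates involving $U_0$ are degenerate, but the minimum is already negative.
\item Hence $\gamma_E(A)=0$ and $p(A)\le\delta/2$.
\item The output is therefore the fixed vector $\widetilde\ell_0$ with probability at least $1-\delta/2$.
\end{itemize}
For suitable $\Pi\in\mathcal L_\rho$, the misclustering rate of $\widetilde\ell_0$ is bounded away from zero. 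Yet under the assumptions of Theorem~\ref{thm:optimal} the claimed bound is $O(1/\log n)$.

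A correct proof therefore needs the stronger hypothesis $\varepsilon_1\ge C_1\log n/(\theta_0^2n)$, with $C_1$ large and depending on $a_0$ and the lower degree constant $c$. For instance, $C_1\ge2(1+a_0)/(a_0c)$ works; this is the analogue of $\varepsilon_1\ge\log m/(a_0\theta_0^2m)$ in Proposition~\ref{pro:Bi2}. The constant must sit in the numerator, not as $1/20$ in the denominator.

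Under this hypothesis, Step~1 is a one-line tail bound: $|\eta|\le\frac{a_0}{2(1+a_0)}\|A\|_\infty$ with probability $1-1/n$. This gives:
\begin{itemize}
\item a first margin of at least $a_0c\,n\theta_0^2/(2\sqrt2)\gg M$;
\item $\widetilde\theta_0\asymp\theta_0$;
\item $\alpha_E(\widetilde\theta_0)\asymp\alpha_E(\theta_0)$.
\end{itemize}
Your Steps~2 and~3 then go through as sketched, with $a_0$ taken small enough for the second condition at the upper end of the window.
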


\subsection{Lower Bound}
We next establish the lower bound for the privacy budget $\varepsilon$ to achieve consistency for any private estimators under DCSBM. The following theorem demonstrates that, even when the degree scale is large and the non-private estimator is consistent, the privacy budget $\varepsilon \lesssim 1/(n\theta_0^2)$ will cause failure in strong consistency for any private estimators. 

\begin{thm}
\label{thm:lowerbound}
For any $\varepsilon<1/(4n\theta_0^2)$, $n\theta_0^2>10\log n$, and $\delta\le \varepsilon^2$, then for any fixed node $i$, no $(\varepsilon,\delta)$-edge DP algorithm satisfies that, 
\[
     P\{\widehat\ell(i)=\ell(i)\}=1-o(1).
\]
\end{thm}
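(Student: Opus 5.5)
The plan is a two-point (Le Cam-type) argument for a single node, combined with group privacy under edge adjacency. Fix node $i$ and two DCSBM models $\Omega^{(1)},\Omega^{(2)}\in\calM_{\theta_0,C_0,\rho}$ that are identical except that node $i$ is placed in community $1$ under the first and in community $2$ under the second; all other labels, the $\theta_j$ and $P$ stay fixed. Since a single node moves, the regular label set condition still holds for $n$ large. Because labels are identified only up to permutation, I choose the other nodes' labels so that any permutation consistent with all remaining nodes is the identity; then "$\hat\ell(i)=\ell(i)$" refers to two disjoint events, $\{\hat\ell(i)=1\}$ and $\{\hat\ell(i)=2\}$.

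The core step is a coupling. Draw all edges not incident to $i$ from the common distribution. For the edges incident to $i$, couple $A^{(1)}_{ij}\sim\mathrm{Bern}(\theta_i\theta_jP_{1\ell(j)})$ and $A^{(2)}_{ij}\sim\mathrm{Bern}(\theta_i\theta_jP_{2\ell(j)})$ maximally. The two coupled graphs then differ only in edges at node $i$, and the number of differing edges $D$ satisfies
\[
E D\le \sum_j \bigl(\Omega^{(1)}_{ij}+\Omega^{(2)}_{ij}\bigr)\le 2n\theta_0^2 .
\]
A Chernoff bound, using $n\theta_0^2>10\log n$, gives $D\le 2n\theta_0^2$ with probability $1-o(1)$. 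To keep the constants tight, it may be better to couple the first model against an intermediate model in which node $i$ has no edges at all. Under this choice $D=\deg(i)$ in the second model, and $E\deg(i)\le n\theta_0^2$ by $\|\Omega\|_\infty\le n\theta_0^2$. I would choose whichever version yields $D\le 2n\theta_0^2$ with high probability.

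Next I apply group privacy along a path of $D$ single-edge changes:
\[
P(\hat\ell(A^{(1)})\in S)\le e^{D\varepsilon}P(\hat\ell(A^{(2)})\in S)+D e^{D\varepsilon}\delta .
\]
On the event $D\le 2n\theta_0^2$ we have $D\varepsilon<1/2$, so $e^{D\varepsilon}<e^{1/2}$. The additive term is $De^{D\varepsilon}\delta\lesssim n\theta_0^2\varepsilon^2\le \varepsilon/4\to 0$, which uses $\delta\le\varepsilon^2$ and $\varepsilon<1/(4n\theta_0^2)\le 1/(40\log n)$. Integrating over the coupling, where the bad event has probability $o(1)$, gives
\[
p_1:=P_1(\hat\ell(i)=1)\le e^{1/2}P_2(\hat\ell(i)=1)+o(1).
\]
Suppose an algorithm had $P_2(\hat\ell(i)=2)=1-o(1)$. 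Then $P_2(\hat\ell(i)=1)=o(1)$, so $p_1=o(1)$, which contradicts $p_1=1-o(1)$. Hence no $(\varepsilon,\delta)$-edge-DP algorithm is correct at node $i$ with probability $1-o(1)$ under both models.

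The main obstacle is the bookkeeping with the random number of edges changed under the coupling. I handle it by conditioning on $D$ and using that the DP inequality holds pointwise for each coupled pair. The label-permutation ambiguity and the requirement that both models stay in $\calM_{\theta_0,C_0,\rho}$ also need care; the latter means checking that $\|\Omega\|_\infty\le n\theta_0^2$ still holds after moving node $i$, which I would arrange by choosing $P$ with rows of comparable size.
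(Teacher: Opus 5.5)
Your proposal is correct and follows essentially the same route as the paper. Both build two DCSBM models that differ only in the label of node $i$, couple them so that the sampled graphs differ only in edges at $i$ with $d_E=O(n\theta_0^2)$ with high probability, and then apply group privacy. One small point: your Chernoff claim $D\le 2n\theta_0^2$ can fail when $E D$ is close to $2n\theta_0^2$, but the constant does not matter, since any bound $D\le Cn\theta_0^2$ (Chernoff gives, e.g., $C=4$) yields a bounded factor $e^{C/4}$ and an additive term $O(\varepsilon)=o(1)$, which is all the contradiction needs.
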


The proof constructs two DCSBM populations that differ only in the label of one fixed node. Under a natural coupling, the sampled networks differ only through edges incident to that node, and their edge distance is of order $n\theta_0^2$ with high probability. Group privacy then prevents any $(\varepsilon,\delta)$-edge DP algorithm with $\varepsilon \lesssim 1/(n\theta_0^2)$ from distinguishing the two labels with probability tending to one.

Theorem~\ref{thm:optimal} shows that our NetPTR approach has strong consistency when $\varepsilon \geq \log(n)/(n\theta_0^2)$. It matches the rate $\varepsilon < 1/(4n\theta_0^2)$ in Theorem~\ref{thm:lowerbound}, up to a logarithmic factor. Hence our NetPTR approach is rate-optimal in privacy budget.

\section{Column-Node-Private Community Detection on Bipartite Networks} \label{sec:bipartite} 

\subsection{Bipartite Spectral Clustering and Column-Node Privacy} \label{subsec:bi-spectral} 
We next extend NetPTR to bipartite networks under a stronger privacy requirement. 
Let $B\in\{0,1\}^{n\times m}$ record interactions between subjects or users $U=[n]$ and items, bills, or other units $V=[m]$. The goal is to cluster the nodes in $U$ from their response across $V$. We consider column-node privacy, which protects the full response profile associated with one item in $V$, rather than protecting only one edge as the ordinary networks. This stronger notion is more practical. For example, a particular question may reveal how all users responded to it. We therefore seek to release the community labels of $U$ while preventing inference about any item-level response profile.

For bipartite community detection on $U$, the spectral approach is to exploit the left singular vectors of $B$, i.e., the leading eigenvectors of $m^{-1}BB\ic$ in magnitude, denoted as $\hat{\Xi}_K$. The non-private spectral clustering algorithm applies row normalization and $K$-means to $\hat{\Xi}_K$; see Algorithm \ref{alg:bispectralcluster}.

\begin{algorithm}
\caption{Bipartite spectral community detection} 
\label{alg:bispectralcluster} 
\begin{algorithmic}[1] 
\Require Bipartite adjacency matrix $B$, number of left communities $K$ 
\Ensure Clustering centers $\mhat_k$, labels $\lhat$, and eigenspace $\Xihat_K$ 
\State Find the singular value decomposition $B=\Xihat\Lambdahat \widehat V\ic$. 
\State Let $\Xihat_K\in\reals^{n\times K}$ collect the first $K$ left singular vectors. 
\State Renormalize  $\rhat_i=\frac{\Xihat_K\ic e_i}{\|\Xihat_K\ic e_i\|}$, $i\in[n]$. 
\State Apply $K$-means to $\{\rhat_i\}_{i\in[n]}$ to obtain $\mhat_{[K]}$ and $\lhat$. 
\State Output $\mhat_{[K]}$, $\lhat$, and $\Xihat_K$. 
\end{algorithmic} 
\end{algorithm} 

We then formalize the column-node privacy. Write $B=[B_1,\ldots,B_m]$, where $B_j\in\{0,1\}^n$ is the column corresponding to node $j\in V$. 
\begin{defn}[Column-node adjacency] 
For two bipartite adjacency matrices $B,B'\in\{0,1\}^{n\times m}$, define \[ d_I(B,B')=\sum_{j=1}^m 1\{B_j\neq B'_j\}. \] We write $B\sim_I B'$ if $d_I(B,B')=1$. 
\end{defn} 
We call an estimator $\tilde{\theta}(B)$ as $(\varepsilon, \delta)$-node DP, if for any Borel set $\mathcal{B}$, $P(\tilde{\theta}(B) \in \mathcal{B}) \leq e^{\varepsilon}P(\tilde{\theta}(B') \in \mathcal{B}) + \delta$. 
It requires the output distribution to be stable when one column of $B$ is replaced. 

\subsection{Bi-NetPTR} 
\label{subsec:binetptr} 
We now construct the column-node-private version of NetPTR. It still requires perturbation analysis and then a proposal of stability certificate. 
Let $Gap(M)=\lambda_K(M)-\lambda_{K+1}(M)$ for a positive semidefinite matrix $M$. By Davis-Kahan theorem, the perturbation in $\Xihat_K(B)$ is related to the eigengap of $BB\ic$. Hence, we use an eigengap certificate. 

Suppose $\theta_0$ denote the universal heterogeneity scale for all left and right side nodes, which is similar as $\theta_0$ in DCSBM. Then $\Gap(BB\ic)$ is at the order of $mn\theta_0^4$ with high probability. We consider the sensitivity level on these networks. 
Define the good set 
\[ \mathcal G_I = \left\{ B:\Gap(BB\ic)>a_0\theta_0^4 nm \right\}, \]
and the sub-distance, i.e., stability certificate, is defined as
\[ 
\gamma_I(B) = \frac{1}{2n} \left\{ \Gap(BB\ic)-a_0\theta_0^4nm \right\}_+ = \frac{m}{2n} \left\{ \Gap\left(\frac1m BB\ic\right)-a_0\theta_0^4 n \right\}_+. \]
Here, $\gamma_I$ has an normalization parameter $(2n)^{-1}$, to ensure it is 1-Lipschitz under column-node adjacency. When $\gamma_I(B)>0$, the eigengap of $m^{-1}BB\ic$ is large enough for Davis--Kahan theorem to control the local sensitivity of $\Xihat_K(B)$. The corresponding local sensitivity level is  $\alpha_I=\frac{4\sqrt{2}}{a_0\theta_0^4m}$, with rigorous derivation in Section \ref{subsec:bi-consistency}.  

Base on the stability certificate and perturbation analysis, we propose Bi-NetPTR as Algorithm \ref{alg:bipartcluster}. For a network $B$ with a large eigengap, we release a perturbed eigenspace $\tilde\Xi_K(B)$, where the noise is calibrated by the local sensitivity level $\alpha_I$. When the network $B$ has a small eigengap, we release a data-independent output. The privacy guarantee is demonstrated in Theorem \ref{thm:BiDP}.
\begin{algorithm}[!t]
\caption{Bi-NetPTR for column-node-private bipartite community detection} \label{alg:bipartcluster} 
\begin{algorithmic}[1] 
\Require Bipartite adjacency matrix $B$, density parameter $\theta_0$, number of left communities $K$, privacy parameters $\varepsilon,\delta$ 
\Ensure A private label vector $\ltilde_{\gamma_I}$ 
\State Find the eigenvalue decomposition  $\frac1m BB\ic=\Xihat\Lambdahat\Xihat\ic $.  
\State Let $\Xihat_K\in\reals^{n\times K}$ collect the first $K$ eigenvectors. \State Compute $\gamma_I(B)$ and $\alpha_I=4\sqrt{2}/(a_0\theta_0^4m)$. 
\State Apply Algorithm~\ref{alg:eptr} to $\Xihat_K$ with sub-distance $\gamma_I$, local sensitivity $\alpha_I$, and privacy parameters $\varepsilon$ and $\delta/2$; denote the output by $\widetilde\Xi_K$. 
\If{$\widetilde\Xi_K=\bot$} 
\State Output a fixed data-independent label vector. 
\Else 
\State Renormalize  $\tilde r_i=\frac{\widetilde\Xi_K\ic e_i}{\|\widetilde\Xi_K\ic e_i\|},\qquad i\in[n]$.  
\State Apply $K$-means to $\{\tilde r_i\}_{i\in[n]}$ to obtain $\ltilde_{\gamma_I}$. 
\State Output $\ltilde_{\gamma_I}$. 
\EndIf 
\end{algorithmic} 
\end{algorithm} 
\begin{thm} \label{thm:BiDP} 
For any fixed $a_0>0$ and $\theta_0>0$, the release $\ltilde_{\gamma_I}$ in Algorithm~\ref{alg:bipartcluster} is $(\varepsilon,\delta)$ column-node differentially private. 
\end{thm}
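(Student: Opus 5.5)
We reduce Theorem~\ref{thm:BiDP} to Theorem~\ref{thm:fPTR} with data space $\{0,1\}^{n\times m}$, adjacency $\sim_I$, estimator $\widehat\theta(B)=\mathrm{vec}(\Xihat_K(B))$ (with a fixed sign and basis convention), good set $\mathcal G_I$, sub-distance $\gamma_I$, local sensitivity $\alpha_I$, and privacy parameters $(\varepsilon,\delta/2)$. Theorem~\ref{thm:fPTR} then makes $\widetilde\Xi_K$ $(\varepsilon,\delta/2)$-DP, hence $(\varepsilon,\delta)$-DP. Row normalization, $K$-means, and the fixed output on $\perp$ are post-processing, so $\ltilde_{\gamma_I}$ is $(\varepsilon,\delta)$ column-node DP. Two hypotheses must be checked.

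\emph{Lipschitz property of $\gamma_I$.} Suppose $B\sim_I B'$ differ only in column $j$. Then
\[
BB\ic-B'B'^{\top}=B_jB_j\ic-B'_jB_j'^{\top}.
\]
This difference is symmetric, and each term is PSD with norm $\|B_j\|^2\le n$. Weyl's inequality therefore gives $|\lambda_k(BB\ic)-\lambda_k(B'B'^\top)|\le n$ for every $k$. Consequently $|\Gap(BB\ic)-\Gap(B'B'^\top)|\le 2n$. Since $x\mapsto x_+$ is 1-Lipschitz, $|\gamma_I(B)-\gamma_I(B')|\le 1$. Also $\gamma_I(B)>0$ is equivalent to $\Gap(BB\ic)>a_0\theta_0^4nm$, so $\{\gamma_I>0\}=\mathcal G_I$.

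\emph{Local sensitivity on $\mathcal G_I$.} Fix $B\in\mathcal G_I$ and $B'\sim_I B$. Write $E=m^{-1}(B'B'^\top-BB\ic)$. Then $E$ has rank at most $2$, $\|E\|\le n/m$, and $\|E\|_F\le\sqrt2\,n/m$. The eigengap of $m^{-1}BB\ic$ exceeds $a_0\theta_0^4 n$. The Davis--Kahan $\sin\Theta$ theorem in the Yu--Wang--Samworth form, which needs a gap only in the unperturbed matrix, gives an orthogonal $O$ with
\[
\|\Xihat_K(B')-\Xihat_K(B)O\|_F\le \frac{2^{3/2}\|E\|_F}{\Gap(m^{-1}BB\ic)}\le\frac{2^{3/2}\sqrt2\, n/m}{a_0\theta_0^4 n}=\frac{4}{a_0\theta_0^4 m}\le \alpha_I .
\]
This is consistent with the paper's $\alpha_I=4\sqrt2/(a_0\theta_0^4m)$, which leaves extra slack.

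\textbf{Main obstacle: the rotation $O$.} Theorem~\ref{thm:fPTR} bounds $\|\widehat\theta(A)-\widehat\theta(A')\|$ without a rotation, but eigenvectors are defined only up to an orthogonal transformation. Our remedy is to apply ePTR to a rotation-invariant representative. The first option is the projector $\Xihat_K\Xihat_K\ic$, whose Frobenius sensitivity is at most $\sqrt2$ times the $\sin\Theta$ bound. The second option is to fix $\Xihat_K$ by a canonical alignment, for instance by rotating so that $\Xihat_K\ic Z$ is symmetric positive definite for a fixed reference $Z$. With either choice, the $K$-means output depends only on the row geometry up to rotation, so it remains post-processing. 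The extra factor from this step is absorbed by the $\sqrt2$ slack in $\alpha_I$. The same convention is used implicitly in Theorem~\ref{thm:DP}, and we would state it explicitly here.
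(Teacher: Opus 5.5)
Your two verifications follow the paper's route. A column replacement changes $BB^\top$ by a difference of two rank-one PSD matrices of norm at most $n$, so by Weyl the gap moves by at most $2n$. Hence $\gamma_I$ is 1-Lipschitz with $\{\gamma_I>0\}=\mathcal G_I$. On $\mathcal G_I$, Davis--Kahan with the gap of the unperturbed matrix gives the local sensitivity, and Theorem~\ref{thm:fPTR} with $\delta/2$ plus post-processing finishes. Your rank-two bound $\|E\|_F\le\sqrt2\,n/m$ gives $4/(a_0\theta_0^4m)$; the paper's $4\sqrt2/(a_0\theta_0^4m)$ is what the cruder $\|E\|_F\le 2n/m$ gives.

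\textbf{The gap is in your paragraph on the rotation $O$.} Neither remedy proves the statement as written.
\begin{itemize}
\item \emph{Projector.} Releasing $\widehat\Xi_K\widehat\Xi_K^\top$ plus an $n\times n$ Gaussian matrix is a different mechanism. Algorithm~\ref{alg:bipartcluster} perturbs the $n\times K$ matrix $\widehat\Xi_K$, and its labels are not a post-processing of a noisy projector.
\item \emph{Canonical alignment.} This does not have sensitivity of order $\alpha_I$. The aligned basis is $\widehat PZ(Z^\top\widehat PZ)^{-1/2}$ with $\widehat P=\widehat\Xi_K\widehat\Xi_K^\top$. Its modulus of continuity in $\widehat P$ blows up as $\sigma_{\min}(\widehat\Xi_K^\top Z)\to 0$, and nothing in $\mathcal G_I$ prevents this. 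Already for $K=1$ it is the sign convention $\mathrm{sign}(\widehat\xi^\top z)\,\widehat\xi$. An arbitrarily small perturbation can flip it and move the estimator by nearly $2\gg\alpha_I$, so this is not absorbed by any $\sqrt2$ slack.
\end{itemize}
For the same reason, your opening claim that $\widetilde\Xi_K$ itself is $(\varepsilon,\delta/2)$-DP is false under an arbitrary basis convention. Only rotation-invariant functions of it are private.

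\textbf{Fix: use invariance of the whole pipeline rather than a choice of representative.} Let $W\in\mathbb R^{n\times K}$ have i.i.d.\ $N(0,1)$ entries and let $\sigma$ be the ePTR noise level. Let $g(X)$ be the label vector obtained by row normalization and $K$-means from $X+\sigma W$, with labels read off the partition only. For any orthogonal $Q$:
\begin{itemize}
\item $WQ\overset{d}{=}W$;
\item row normalization commutes with right multiplication by $Q$;
\item the $K$-means partition is unchanged by a common isometry of the points.
\end{itemize}
Hence $g(XQ)\overset{d}{=}g(X)$. Also, the release probability $p(B)$ depends on $B$ only through eigenvalues.

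Now fix $B\sim_I B'$ with, say, $B\in\mathcal G_I$. Replacing $\widehat\Xi_K(B')$ by $\widehat\Xi_K(B')O^\top$ leaves the law of $\widetilde\ell_{\gamma_I}(B')$ unchanged. This representative is within $\alpha_I$ of $\widehat\Xi_K(B)$ in Frobenius norm. The DP inequality concerns one neighboring pair at a time. The ePTR argument uses sensitivity only for that pair: it combines the ratio bound on $p$ from $|\gamma_I(B)-\gamma_I(B')|\le 1$ with the Gaussian mechanism. So rerunning it on the label release, with this pair-dependent representative, gives $(\varepsilon,\delta)$ column-node DP for the algorithm exactly as written. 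Pairs with $\gamma_I(B)=\gamma_I(B')=0$ need no sensitivity bound, since then $p(B)=p(B')\le\delta/2$.
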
 

Algorithm~\ref{alg:bipartcluster} uses $\theta_0$ as a density-scale input in the stability certificate. In some applications, this parameter can be supplied from public information or domain knowledge. Otherwise, it can be estimated privately. Since $\|B\|_\infty$ changes by at most one under column-node adjacency, the Laplace mechanism gives \begin{equation} \label{eqn:thetatilde_bi} \thetatilde_0(B) = \left[ \frac{\|B\|_\infty}{m} + \frac{\mathrm{Lap}(1/\varepsilon_1)}{m} \right]_+^{1/2}, \end{equation} as an $(\varepsilon_1,0)$-DP estimate. Applying Algorithm~\ref{alg:bipartcluster} with $\thetatilde_0(B)$ gives an $(\varepsilon+\varepsilon_1,\delta)$ column-node-DP estimator by composition. 

\subsection{Consistency under the Bi-DCSBM} 
\label{subsec:bi-consistency} 
We now introduce the bipartite degree-corrected stochastic block model (Bi-DCSBM) for accuracy analysis. 
Conditional on a population matrix $\Omega\in[0,1]^{n\times m}$, the entries of $B$ follow $B_{ij}\sim \mathrm{Bernoulli}(\Omega_{ij})$ independently, $i\in[n]$, $j\in[m]$. $\Omega$ takes the form 
\begin{equation} 
\label{eqn:bisbm} 
\Omega=\Theta\Pi_n P_B\Pi_m\ic\Phi . 
\end{equation} 
Here $\Theta=\mathrm{diag}(\theta_1,\ldots,\theta_n)$ and $\Phi=\mathrm{diag}(\phi_1,\ldots,\phi_m)$ contain degree heterogeneity parameters on the left and right sides, $\Pi_n\in\{0,1\}^{n\times K}$ and $\Pi_m\in\{0,1\}^{m\times J}$ are membership matrices, and $P_B\in\reals^{K\times J}$ is the block connectivity matrix. 
It is connected to the ordinary networks through the population Gram matrix: 
\[ 
\frac1m \E[B]\E[B]\ic = \Theta\Pi_n P\Pi_n\ic\Theta, \qquad P:=\frac1m P_B\Pi_m\ic\Phi^2\Pi_m P_B\ic .
\] 
Thus the left-side spectral problem has the same degree-corrected form as in the ordinary-network case, with an induced block matrix $P$. Under the regularity conditions below, this induced matrix has a nondegenerate $K$-dimensional signal. 

Define the regular label class so that every community on both sides is non-vanishing.
\[ 
\mathcal L_\rho^B := \left\{ (\Pi_n,\Pi_m): \Pi_n\mathbf 1_K=\mathbf 1_n,\ \Pi_m\mathbf 1_J=\mathbf 1_m,\ \mathbf 1_n\ic\Pi_n e_k\ge \rho n,\ \mathbf 1_m\ic\Pi_m e_j\ge \rho m \right\}. 
\] 
For fixed constants $C_0>0$ and $\rho\in(0,1]$, define 
\[ 
\begin{aligned} 
\calM^B_{\theta_0,C_0,\rho} = \Bigl\{ \Omega=\Theta\Pi_n P_B\Pi_m\ic\Phi:\;& (\Pi_n,\Pi_m)\in\mathcal L_\rho^B,\ C_0^{-1}\theta_0<\theta_i,\phi_j\le C_0\theta_0,\\ 
& C_0>\sigma_1(P_B)\ge \sigma_K(P_B)>C_0^{-1},\ \|\Omega\|_\infty\le m\theta_0^2 \Bigr\}. 
\end{aligned} 
\] 
Here $\sigma_k(P_B)$ denotes the $k$th singular value of $P_B$, which ensures the induced signal is full rank. 
The degree parameters on both sides are on the common scale $\theta_0$. 

\begin{thm} \label{thm:BiOptimal} 
For any fixed $C_0>0$, $\rho\in(0,1]$, and integer $K\ge 2$, there exist constants $a_0,c_0>0$ such that the following holds. Let $B$ be generated from the Bi-DCSBM with mean matrix $\Omega\in\calM^B_{\theta_0,C_0,\rho}$. If 
\[
\varepsilon\geq \frac{\log n}{c_0\theta^4_0 m}, \qquad \frac{1}{n}\geq \delta\geq \exp\left( -\frac{a_0\theta^4_0 n\varepsilon}{\log(m\theta_0^4)} \right), 
\]
then with probability at least $1-O(1/m)$, there exists a permutation $\pi$ of $[K]$ such that 
\[ 
\frac{1}{n} \left| \{i:\ltilde_{\gamma_I}(i)\neq \pi\circ \ell(i)\} \right| \lesssim \frac{m+n}{\theta_0^4mn} + \frac{n\log(1.25/\delta)} {\varepsilon^2\theta_0^8m^2}. 
\]
Moreover, there exists a constant $c>0$ such that if 
$\frac{\log n}{\sqrt m} + \frac{\sqrt n\log n\log(1.25/\delta)} {\varepsilon\theta_0^4m} \le c$,  
then with high probability, $\ltilde_{\gamma_I}=\pi\circ\ell$, which means an exact recovery. 
\end{thm}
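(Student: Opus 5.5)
The argument has three steps. First, on an event of probability $1-O(1/m)$, the observed $B$ falls deep inside the good set $\mathcal G_I$, so the ePTR test passes with probability close to one. Second, on that event, we bound the distance between the released $\widetilde\Xi_K$ and the population eigenspace $\Xi_K$, up to rotation, as non-private error plus Gaussian noise. Third, we turn this eigenspace error into a misclustering bound, following the route behind Theorem~\ref{thm:optimal}.

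\emph{Step 1: concentration of the Gram matrix.} Write $B=\Omega+W$ and decompose
\[
\tfrac1m BB\ic-\tfrac1m\Omega\Omega\ic=\tfrac1m\bigl(\Omega W\ic+W\Omega\ic\bigr)+\tfrac1m\bigl(WW\ic-\mathrm{diag}(\E WW\ic)\bigr)+\tfrac1m\,\mathrm{diag}(\E WW\ic).
\]
The diagonal term is at most of order $\theta_0^2$, since $\|\Omega\|_\infty\le m\theta_0^2$. The off-diagonal quadratic term is of order $\theta_0^2\sqrt{n/m}\,\log$ by matrix Bernstein or the decoupling bounds for $WW\ic$. The cross terms are of order $\theta_0^3\sqrt{n}$ up to logarithms, using $\|\Omega\|\lesssim\theta_0^2\sqrt{nm}$ and $\|W\|\lesssim\theta_0(\sqrt n+\sqrt m)$.

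The population matrix $\tfrac1m\Omega\Omega\ic=\Theta\Pi_nP\Pi_n\ic\Theta$ has rank $K$. Its $K$th eigenvalue is at least $c\,\theta_0^4 n$, because $\lambda_K(P)\gtrsim\theta_0^2$ follows from the $\sigma_K(P_B)$ bound together with the balance conditions on both sides. Weyl's inequality then gives $\Gap(\tfrac1m BB\ic)\ge 2a_0\theta_0^4n$ for small $a_0$. Hence
\[
\gamma_I(B)\ge \frac{a_0\theta_0^4 m}{2}.
\]
This exceeds $2M=2+\frac4\varepsilon\log(2/\delta)$ as soon as $\varepsilon\gtrsim \log(1/\delta)/(\theta_0^4m)$. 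The stated conditions on $\varepsilon$ and $\delta$ are designed to ensure this; if a direct comparison is loose, I will match the logs as in Theorem~\ref{thm:optimal}. It follows that $p(B)=1$ and the Gaussian branch is released.

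\emph{Step 2: eigenspace error.} Davis--Kahan applied to the gap from Step 1 gives
\[
\|\Xihat_K-\Xi_KO\|_F\lesssim \sqrt K\,\frac{\|E\|}{\theta_0^4 n},
\]
where $E$ is the perturbation above. Its square is of order $(m+n)/(\theta_0^4 mn)$ times $n$, after the leading cross-term variance is computed more carefully via $\|W\ic\Xi_K\|$ rather than the crude $\|\Omega\|\|W\|$. The noise $\tfrac{\alpha_I}{\varepsilon}\sqrt{2\log(2.5/\delta)}\,\zeta$, with $\zeta$ an $n\times K$ Gaussian matrix, has squared Frobenius norm of order $nK\alpha_I^2\log(1/\delta)/\varepsilon^2$ with high probability, which equals $n\log(1/\delta)/(\varepsilon^2\theta_0^8m^2)$ times $\|\Xi_K\|_{2,\infty}^{-2}$-scaling.

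\emph{Step 3: clustering error.} Since the rows of $\Xi_K$ have norm of order $1/\sqrt n$, the standard row-normalization and $K$-means argument of \citet{lei2015consistency} bounds the misclustering fraction by $C\,\|\widetilde\Xi_K-\Xi_KO\|_F^2$. This gives both terms in the stated rate. For exact recovery, I will instead bound the $2\to\infty$ error. The non-private part is handled by a leave-one-out or $2\to\infty$ perturbation bound for $BB\ic$, giving order $\log n/\sqrt{mn}$ per row. The noise part is of order $\alpha_I\sqrt{\log n\log(1/\delta)}/\varepsilon$ per row by a Gaussian maximum. Both must be $\le c/\sqrt n$, which is exactly the displayed condition. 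Then every normalized row lies closer to its own center than to any other, and $K$-means recovers $\pi\circ\ell$.

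\emph{Main obstacle.} I expect the main difficulty to be the sharp concentration in Step 1 and its row-wise version in Step 3. The quadratic form $WW\ic$ with a heterogeneous, sparse, rectangular $W$ needs care with the diagonal bias and with the regime where $m\gg n$. I would first try matrix Bernstein on $\sum_j (W_jW_j\ic-\E W_jW_j\ic)$, treating each column as an independent summand, and then use a leave-one-row-out argument for the $2\to\infty$ bound.
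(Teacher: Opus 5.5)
Your route (concentrate $\frac1m BB^\top$ around $\Theta\Pi_nP\Pi_n^\top\Theta$, deduce $\gamma_I(B)\ge a_0\theta_0^4m/2>2M$ so that $p(B)=1$, then Davis--Kahan plus the Gaussian noise, then the row-normalization and $K$-means count, with a $2\to\infty$ version for exact recovery) is the natural one and matches the paper's structure. However, the release step does not go through under the hypothesis as written, and your Step~2 bookkeeping is off by a factor of $n$.

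\emph{Release step.} This is not a matter of matching logarithms. On your good event $\gamma_I(B)\asymp\theta_0^4m$; it is also at most $C\theta_0^4m$, since $\lambda_K(\frac1m BB^\top)\lesssim\theta_0^4n$. The hypothesis on $\delta$ only gives $\varepsilon^{-1}\log(1/\delta)\le a_0\theta_0^4n/\log(m\theta_0^4)$, with $n$ rather than $m$. So $2M=2+4\varepsilon^{-1}\log(2/\delta)<\gamma_I(B)$ follows only when $n\lesssim m\log(m\theta_0^4)$. When $n\gg m\log(m\theta_0^4)$, the stated hypotheses allow $\log(1/\delta)\ge C\varepsilon\theta_0^4m$ together with $\varepsilon$ so large that the right-hand side of the claimed bound tends to zero. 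In that case $p(B)\to0$, the fixed label vector is released, and it cannot meet a vanishing bound. You must therefore either add $n\lesssim m\log(m\theta_0^4)$, or read the exponent as $a_0\theta_0^4m\varepsilon/\log(m\theta_0^4)$, which is what the $\theta_0^4m$ scale of $\gamma_I$ calls for. Under that reading, $c_0\le a_0/32$ and $m\theta_0^4$ large give $\gamma_I(B)>2M$. You should also say explicitly that you may assume $(m+n)/(\theta_0^4mn)\le c$, since otherwise the bound is trivial. Your Weyl step ($\|E\|\le c\,\theta_0^4n$) needs essentially this.

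\emph{Normalizations and the remaining steps.} Davis--Kahan gives $\|\widehat\Xi_K-\Xi_KO\|_F^2\lesssim K\|E\|^2/(\theta_0^8n^2)$. Even the crude bound $\|E\|\le m^{-1}(2\|\Omega\|\|W\|+\|W\|^2)\lesssim\theta_0^2\sqrt{n(n+m)/m}$ makes this $\lesssim(m+n)/(\theta_0^4mn)$ itself, not $n$ times it. Likewise, the noise has squared Frobenius norm $\asymp nK\alpha_I^2\log(2.5/\delta)/\varepsilon^2\asymp n\log(1/\delta)/(\varepsilon^2\theta_0^8m^2)$, with no extra $\|\Xi_K\|_{2,\infty}^{-2}$ factor. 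The factor $\|e_i^\top\Xi_K\|^{-2}\asymp n$ from row normalization is exactly what cancels the $1/n$ in the misclustering fraction. As written, your Steps~2 and~3 together would give $n$ times the target rate.

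Your orders for the cross and quadratic terms also hold only for $m\gtrsim n$. For $n>m$ there are extra terms $\theta_0^3n/\sqrt m$ and $\theta_0^2n/m$, which are harmless after squaring and dividing by $\theta_0^8n^2$. You can avoid the $WW^\top$ concentration you list as the main obstacle by working with singular values of $B$. Weyl gives $\mathrm{Gap}(BB^\top)\ge\sigma_K(\Omega)\{\sigma_K(\Omega)-2\|W\|\}$, and Wedin then needs only $\|W\|\lesssim\theta_0\sqrt{n+m}$.

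Finally, in the exact-recovery part your row-wise rate $\log n/\sqrt{mn}$ drops $\theta_0$. The cross term alone contributes about $\sqrt{\log n}/(\theta_0\sqrt{mn})$ per row. In addition, the heterogeneous diagonal $\frac1m\mathrm{diag}(BB^\top)\asymp\theta_0^2$ distorts row directions by a relative amount of order $1/(n\theta_0^2)$, which row normalization does not remove in general because the leading eigenvalues are distinct. So you need $\theta_0$ bounded below, or explicit conditions on $\theta_0^2m$ and $\theta_0^2n$.
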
 
The first term in Theorem~\ref{thm:BiOptimal} is the non-private spectral error for estimating the left communities from the bipartite network. The second term is the additional error caused by column-node privacy. The bound decreases as $m$ increases, reflecting that the matrix $m^{-1}BB\ic$ becomes more stable when more right-side profiles are observed. The next result records that the same conclusion holds when $\theta_0$ is privately estimated by \eqref{eqn:thetatilde_bi}. 
\begin{pro} 
\label{pro:Bi2} 
Under the assumptions of Theorem~\ref{thm:BiOptimal}, let Algorithm~\ref{alg:bipartcluster} use $\thetatilde_0(B)$ in \eqref{eqn:thetatilde_bi}. Then the release $\ltilde_{\gamma_I}$ is $(\varepsilon+\varepsilon_1,\delta)$ column-node differentially private. Moreover, if 
\[ 
1\geq \varepsilon_1\geq \frac{\log m}{a_0\theta^2_0 m}, \qquad \varepsilon\geq \frac{\log n}{c_0\theta^4_0 m}, \qquad \frac{1}{n}\geq \delta\geq \exp\left( -\frac{a_0\theta^4_0 n\varepsilon}{\log(m\theta_0^4)} \right), \] 
then the conclusion of Theorem~\ref{thm:BiOptimal} continues to hold. 
\end{pro}

\section{Simulation}\label{sec:simulation}

\subsection{Edge-Private Community Detection}\label{subsec:sim_symmetric}
We first evaluate the edge-private NetPTR procedure under the DCSBM. In each experiment, we fix $K=2$ and assign exactly $n/2$ nodes to each community. The block connectivity matrix $P \in \reals^{2 \times 2}$ has $p_{in}$ on diagonals and $p_{out}$ on diagonals. The population matrix $\Omega=\Theta\Pi P\Pi^\top\Theta$, where $\Theta$ and $\Pi$ are randomly generated from a given distribution in each repetition. We then sample $A$ by $A_{ij}\sim{\rm Bernoulli}(\Omega_{ij})$ independently for $i<j$. It is symmetrized to obtain an undirected adjacency matrix. 

We consider two scenarios. The first is a regular scenario with $(p_{\rm in},p_{\rm out})=(0.4,0.1)$ and $\theta_i\sim{\rm Unif}(0.1,0.5)$. The second is a heterogeneous scenario with $(p_{\rm in},p_{\rm out})=(0.9,0.3)$, where $\theta_i\sim{\rm Unif}(0.1,0.5)$ with probability $0.6$ and $\theta_i\sim{\rm Unif}(0.01,0.05)$ with probability $0.4$. Hence, it contains a large fraction of low-degree nodes, which is more challenging for private spectral clustering. We examine the accuracy of NetPTR in the two scenarios versus sample size $n$, privacy budget $\varepsilon$, and private $\theta_0$. The accuracy is evaluated by the Hamming distance between the estimated label and ground truth. 

{\bf Sample size and privacy budget.}
We first study NetPTR as a standalone private clustering method. We vary the sample size $n\in\{5000,10000,15000,\cdots,50000\}$, the privacy budget $\varepsilon\in\{0.5,0.8,1.0\}$, and fix $\delta=0.01$. We compare two versions of Algorithm~\ref{alg:spectralDPclusterdetailed}: a non-private estimate $\hat\theta_0$, and a fully private version using $\tilde\theta_0$ with $\varepsilon_1=0.2$. The non-private community detection algorithm is included as a comparison. 
Each configuration is repeated 50 times.

Figure~\ref{fig:exp1_err} summarizes the mean community detection error rate versus $n$ for each $\varepsilon$. To protect the privacy, NetPTR suffers a larger clustering error rate than the non-private algorithm. This error rate decreases as the sample size increases or the privacy budget increases. In the regular scenario, NetPTR achieves an error rate at 0 at moderate sample size $n \geq 20,000$. With private $\tilde\theta_0$, the error rate of NetPTR is slightly larger, but still achieves 0 when $n \geq 30,000$. In the heterogeneous scenario with extreme sparsity, even the non-private method has a larger clustering erro rate. NetPTR introduces a mild additional error rate at around $0.1$ in this scenario. It indicates the robustness of NetPTR.

\begin{figure}[!htbp]
  \centering
  \begin{subfigure}[t]{0.49\textwidth}
    \centering
    \includegraphics[width=\textwidth]{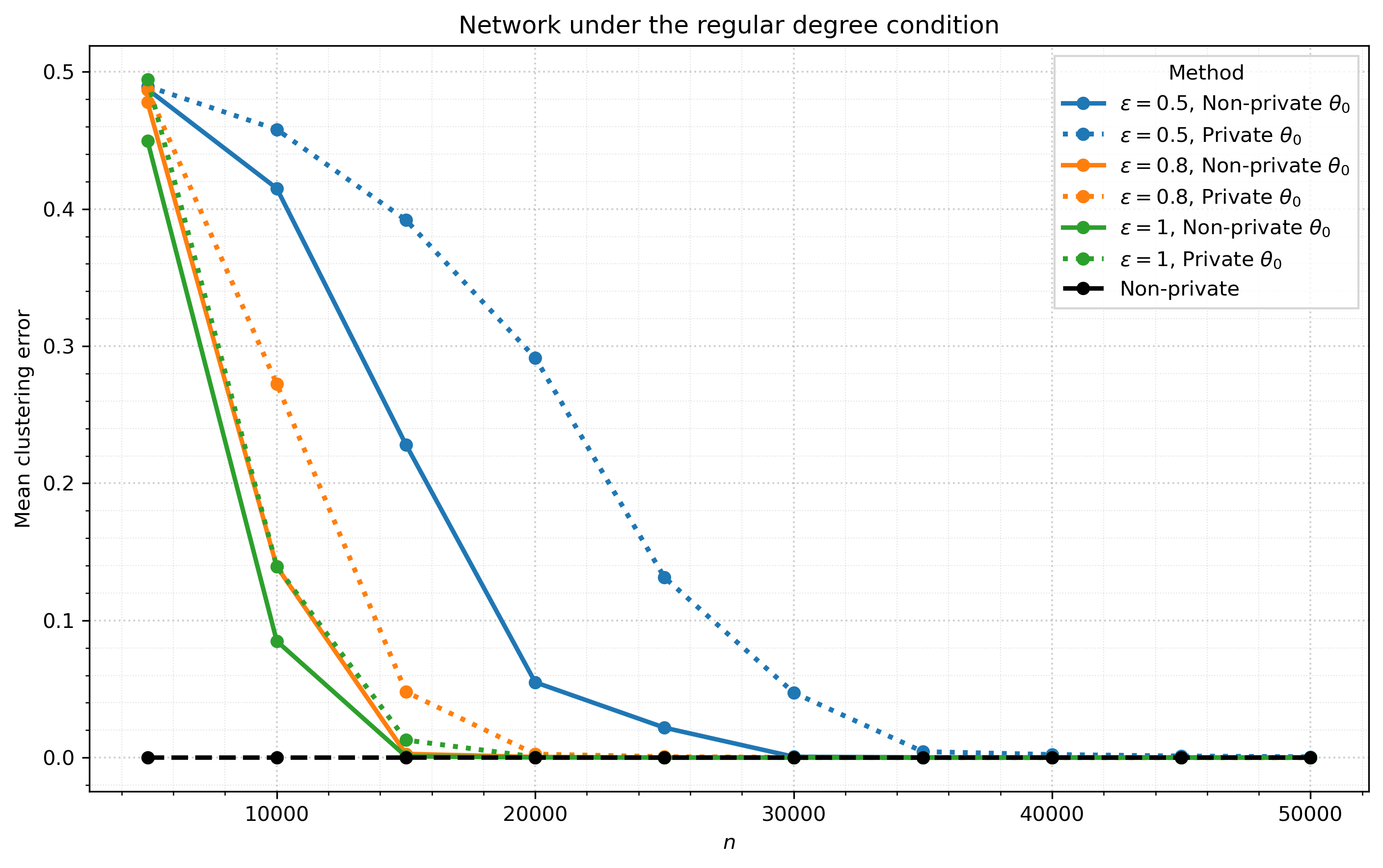}
  \end{subfigure}\hfill
  \begin{subfigure}[t]{0.49\textwidth}
    \centering
    \includegraphics[width=\textwidth]{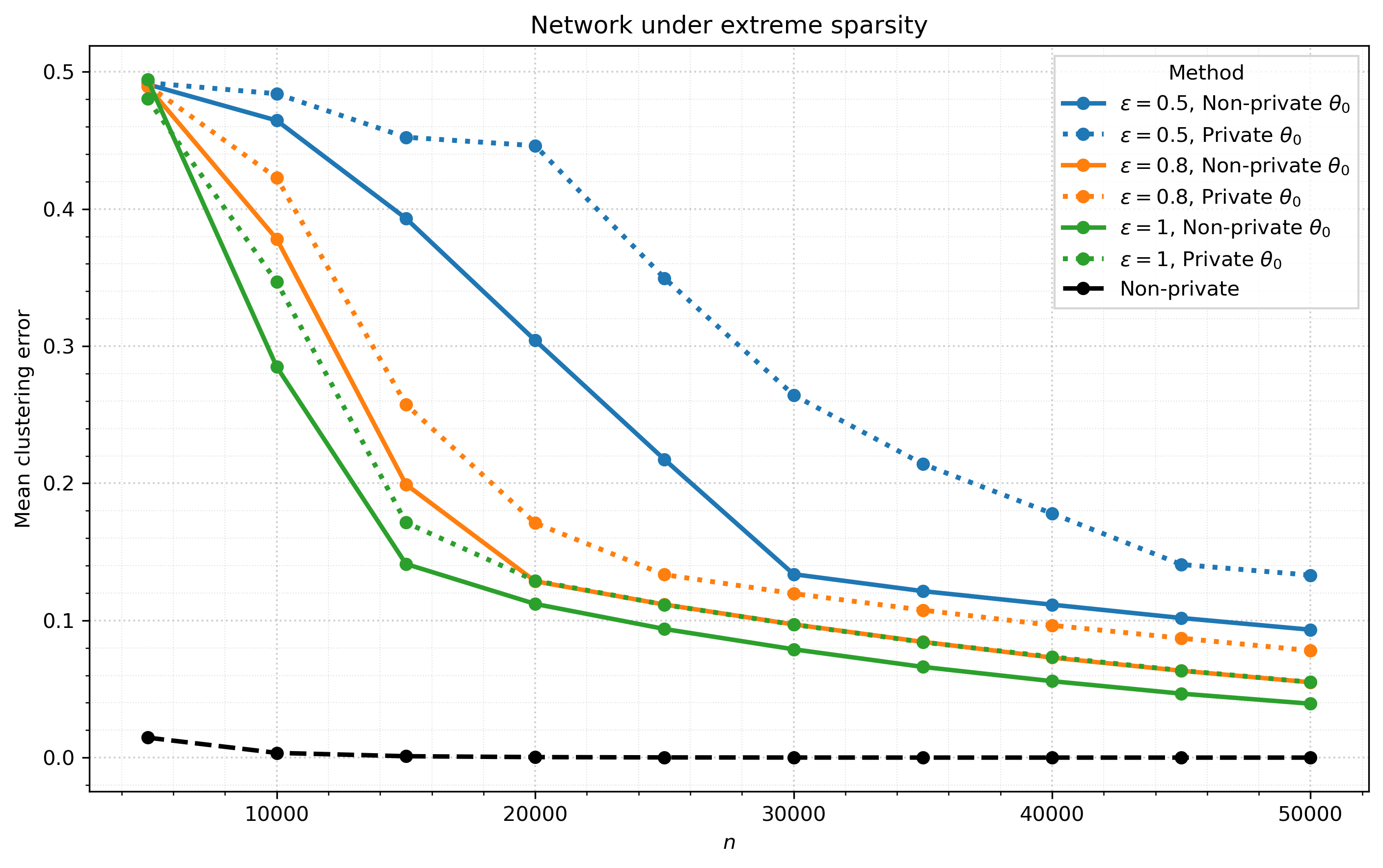}
  \end{subfigure}

  \caption{
  Mean clustering error rate versus network size $n$, with privacy budget $\varepsilon \in \{0.5,0.8,1.0\}$ for NetPTR with non-private $\hat\theta_0$ (solid line) and private $\tilde\theta_0$ (dotted line).
  }
  \label{fig:exp1_err}
\end{figure}

{\bf Comparison with edge flipping.}
We compare our NetPTR with the classical edge-flipping baseline (EdgeFlip, implemented following the procedure in \cite{mohamed2022differentially}). Fix $n=20000$ and vary the privacy budget $\log_2(\varepsilon) \in \{-1,-0.75,\cdots,0.75,1\}$ and $\delta = 0.01$. For each $\varepsilon$, we run NetPTR with known $\theta_0$ $(\varepsilon_0 = \varepsilon, \varepsilon_1 = 0)$ and a private estimate $\tilde{\theta}_0$ $(\varepsilon_0 = \varepsilon-0.2, \varepsilon_1 = 0.2)$ respectively, as well as EdgeFlip. We summarize the mean clustering error over 50 independent replications in Figure \ref{fig:exp2_edgeflip}. It can be seen that NetPTR, whether with non-private $\hat\theta_0$ or private $\tilde\theta_0$, improves substantially faster as the privacy budget $\varepsilon$ increases in both scenarios.
\begin{figure}[!htbp]
  \begin{subfigure}[t]{0.49\textwidth}
    \centering
    \includegraphics[width=\textwidth]{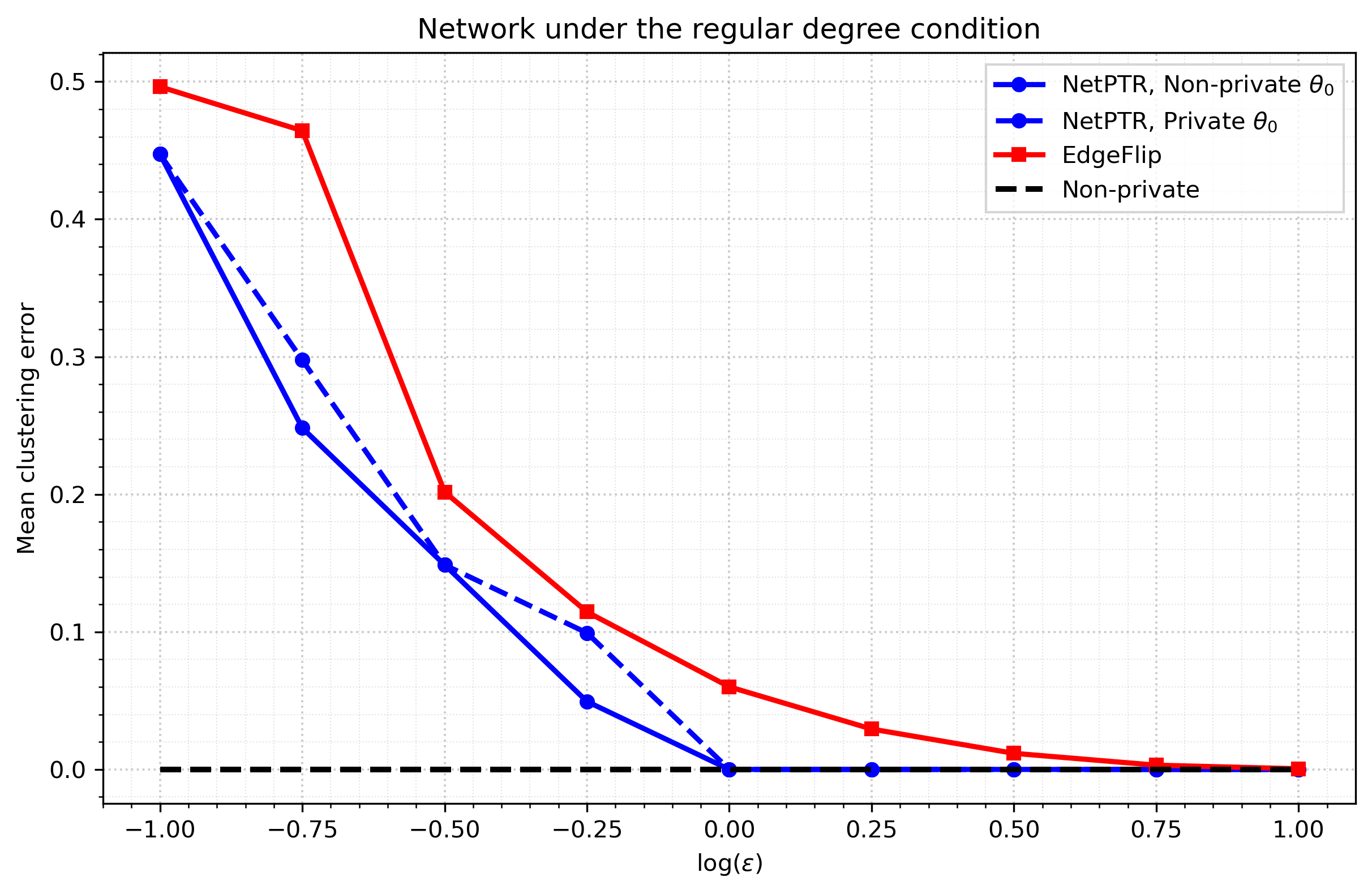}
  \end{subfigure}\hfill
  \begin{subfigure}[t]{0.49\textwidth}
    \centering
    \includegraphics[width=\textwidth]{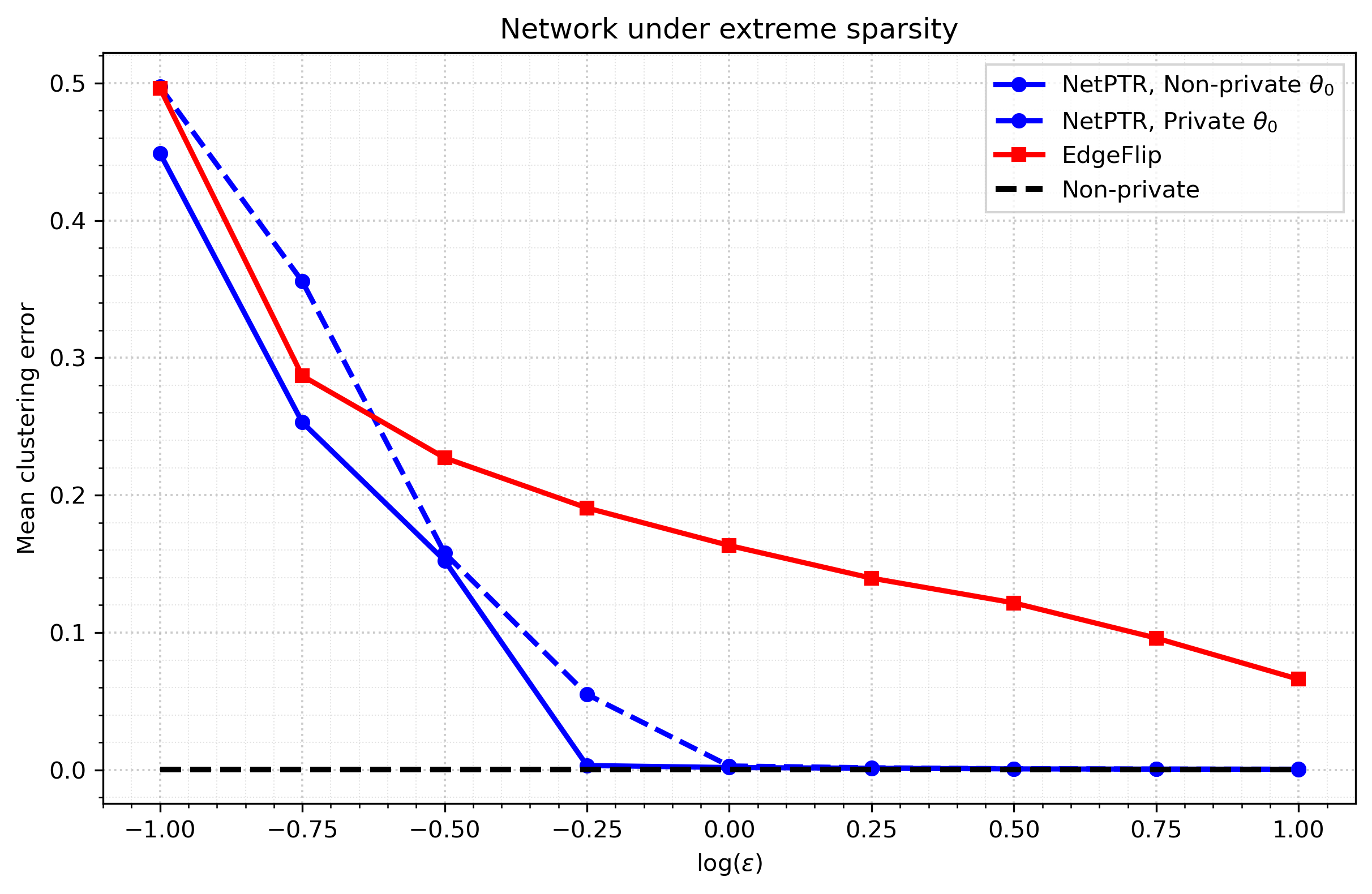}
  \end{subfigure}
  \caption{Mean clustering error versus $\log(\varepsilon)$ for NetPTR and EdgeFlip.}
  \label{fig:exp2_edgeflip}
\end{figure}

{\bf Network Sparsity.} 
We examine the privacy-utility behavior with varying $\theta_0$ levels. Let $\theta_i\sim \mathrm{Unif}(a,a + 0.4)$ for the regular scenario and $\theta_i\sim 0.6 \mathrm{Unif}(a,a + 0.3) + 0.4 \mathrm{Unif}(0.1a,0.1a + 0.03)$ for the scenario with extreme sparsity. The parameter $\theta_0 = \sqrt{\|\Omega\|_{\infty}/n}$ is decided by $a$. We vary $a$ to figure out its values that $\theta_0$ ranges from .10 to .50 by a step size of $.05$ in the regular setting, and $\theta_0$ ranges from .175 to .550 by a step size of $.0625$ for the scenario with extreme sparsity. Fix $n=20000$, $\varepsilon = 0.8$ and $\delta = 0.01$.

\begin{figure}[!htbp]
  \centering
 \begin{subfigure}[htbp]{0.49\textwidth}
    \centering
    \includegraphics[width=\textwidth]{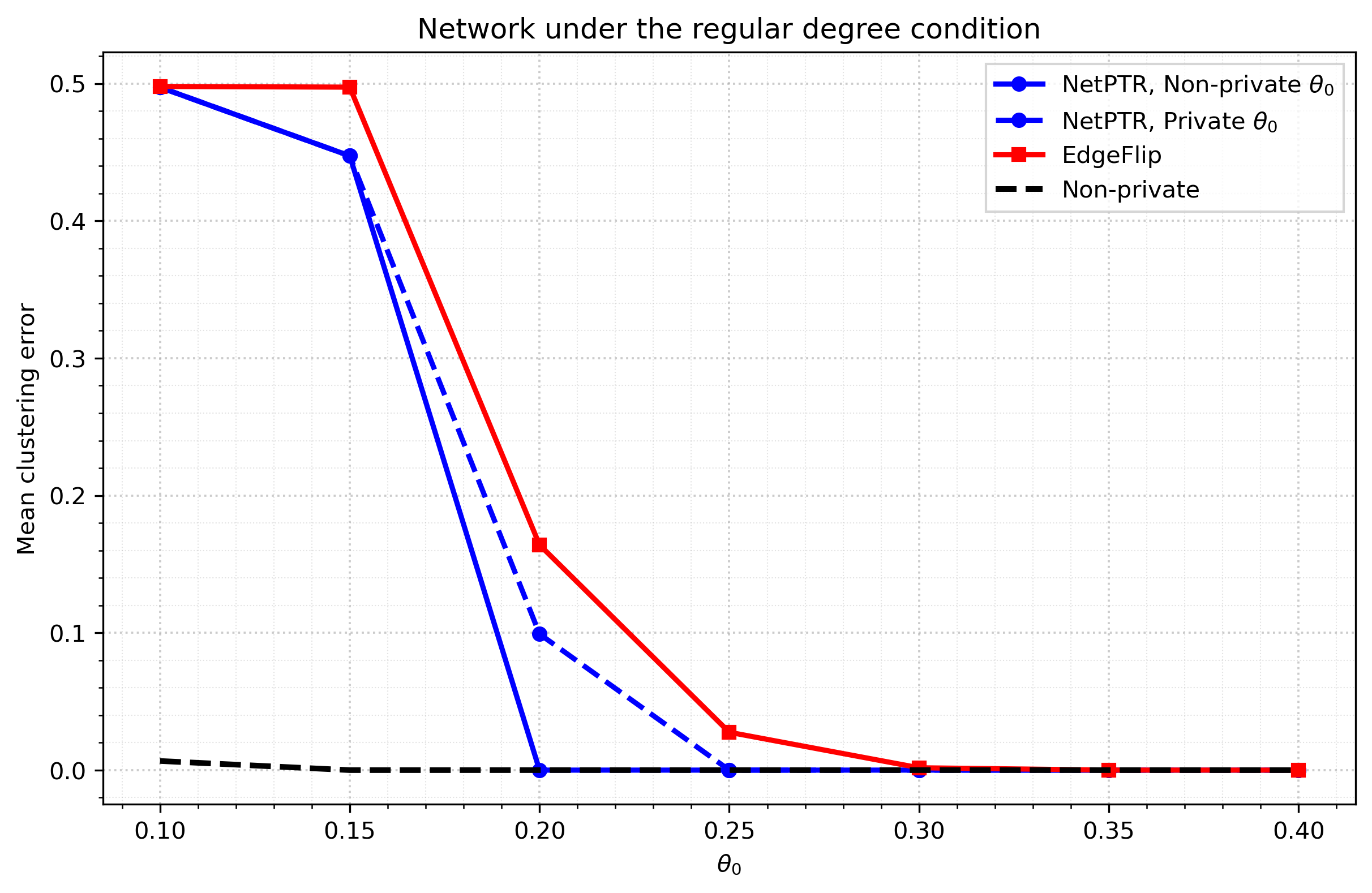}
  \end{subfigure}\hfill
  \begin{subfigure}[htbp]{0.49\textwidth}
    \centering
    \includegraphics[width=\textwidth]{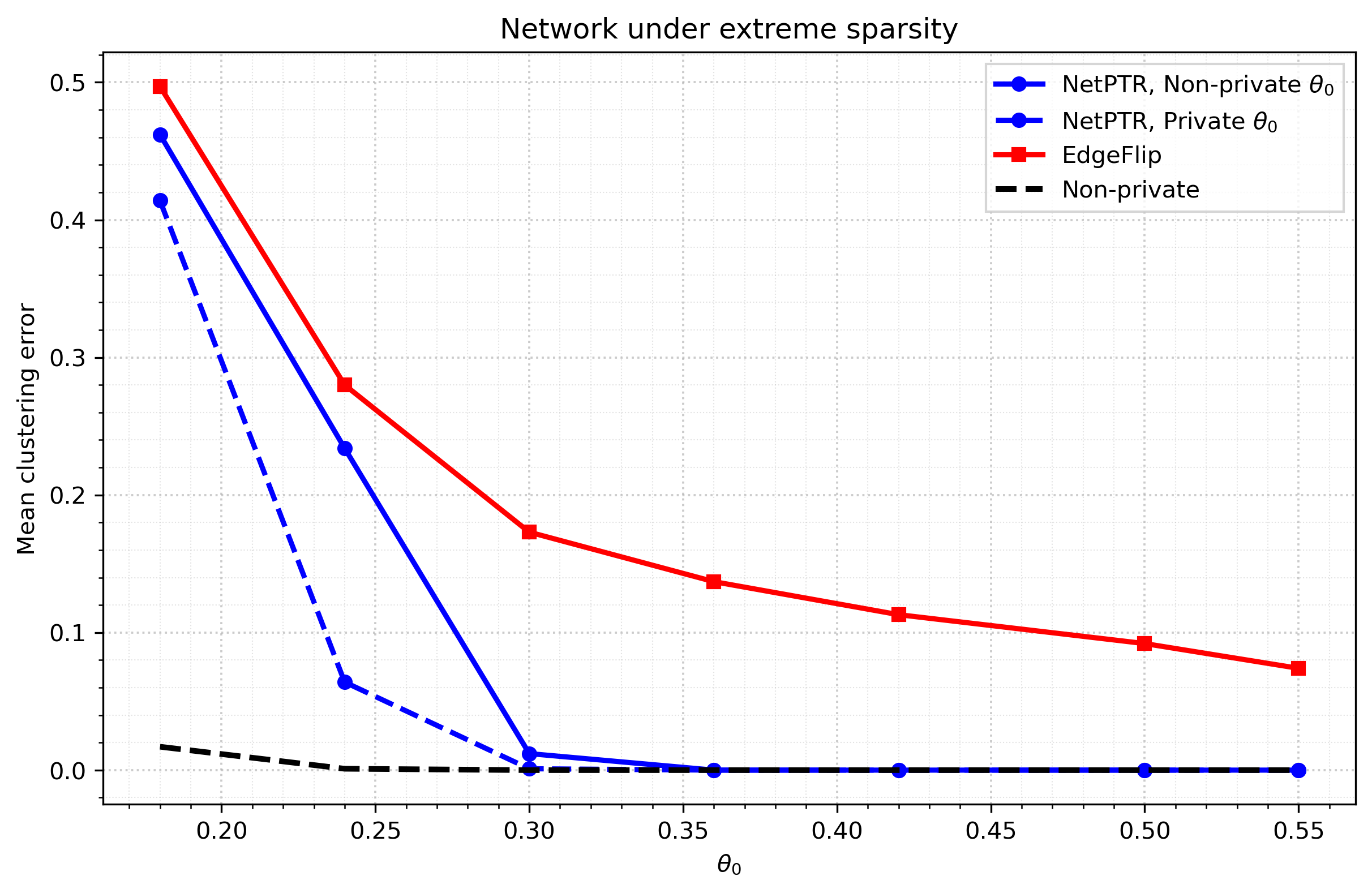}
  \end{subfigure}
  \caption{Mean clustering error versus the effective degree scale $\theta_0$ for different mechanisms.}
  \label{fig:exp3_theta0}
\end{figure}

The mean clustering errors over 50 repetitions are summarized in Figure~\ref{fig:exp3_theta0}. For both scenarios, the non-private spectral clustering baseline is nearly exact across varying $\theta_0$, so the separation between curves mainly reflects privacy-induced perturbations. As $\theta_0$ increases, the performance of both private mechanisms improves. NetPTR improves more rapidly and achieves near-exact recovery at moderate $\theta_0$, whereas EdgeFlip remains at an error rate $0.2$ in the scenario with extreme sparsity with large $\theta_0$.
It is consistent with Theorem~\ref{thm:optimal}, where NetPTR has an error bound with respect to the network density, but EdgeFlip suffers in this degree heterogeneous setting.

\subsection{Bipartite-network experiments under Bi-DCSBM}
\label{subsec:sim_bipartite}
We evaluate Bi-NetPTR in Section \ref{sec:bipartite}. 
Consider $n = 800$ left nodes and $m = 8000$ right nodes. Fix $K = 2$ left-communities and $J = 2$ right-communities, and assign exactly $n/2$ left nodes and $m/2$ right nodes to each community. 
Generate the degree parameters independently as $\theta_i, \phi_j \sim \mathrm{Unif}(0.7,1.0)$. The adjacency matrix follows that $B_{ij}\sim \mathrm{Bernoulli}(\theta_i \phi_j p_{\ell_1(i), \ell_2(j)})$, where 
$p_{k,l} = 0.7*I\{k = l\} + 0.1*\{k \neq l\}$.
For each generated $B$, we apply the non-private bipartite spectral clustering method as Algorithm \ref{alg:bispectralcluster}, as well as the Bi-NetPTR method with $\varepsilon$ and $\delta=0.01$.

{\bf Privacy budget.} 
We first study how the clustering accuracy changes with the privacy budget $\varepsilon$.
We vary
\(
\varepsilon
\in
\{0.5,1.0,2.0,4.0\},
\)
and consider three privacy-budget splits for Bi-NetPTR,
\(
\varepsilon_1\in\{0,0.1,0.2\},
\)
where $\varepsilon_1>0$ corresponds to using a private estimate of $\theta_0$.

The upper panel of Table~\ref{tab:bipartite_sim} reports the mean clustering error, standard deviation, and empirical release frequency among 50 repetitions.
The non-private method achieves zero error in all repetitions and is omitted from the table. 
As $\varepsilon$ increases, the clustering error decreases to 0 no matter what split we use. For a small privacy budget $\varepsilon = 0.5$, allocating part of the budget to $\theta_0$ leads to a lower release frequency and hence larger error. When $\varepsilon \geq 1.0$, the release probability increases to 1 quickly. Hence, the private estimation of $\theta_0$ is practically feasible.

{\bf Varying the number of right-side nodes.}
We fix $n=800$ and $\varepsilon=3.0$, and vary the number of right-side nodes over $
m\in\{1000,1500,2000,3000,$
$4000,5000,6000,7000\}.$
We consider $\varepsilon_1\in\{0,0.1,0.2\}$. The results over 50 repetitions are recorded in the lower panel of Table~\ref{tab:bipartite_sim}. 

\begin{table}[!t]
\centering
\scriptsize
\resizebox{\textwidth}{!}{
\begin{tabular}{ll|cc|cc|cc}
\hline
Experiment & Setting
& \multicolumn{2}{c|}{Bi-NetPTR, $\varepsilon_1=0$}
& \multicolumn{2}{c|}{Bi-NetPTR, $\varepsilon_1=0.1$}
& \multicolumn{2}{c}{Bi-NetPTR, $\varepsilon_1=0.2$} \\
\cline{3-8}
& & Error (SD) & Rel. & Error (SD) & Rel. & Error (SD) & Rel. \\
\hline
\multicolumn{8}{l}{\textit{Varying $\varepsilon$ with $(n,m)=(800,8000)$}} \\
\hline
& $\varepsilon=0.5$ & 0.443 (0.043) & 0.60 & 0.479 (0.027) & 0.10 & 0.484 (0.009) & 0.00 \\
& $\varepsilon=1.0$ & 0.298 (0.014) & 1.00 & 0.334 (0.065) & 0.87 & 0.363 (0.062) & 0.80 \\
& $\varepsilon=2.0$ & 0.145 (0.011) & 1.00 & 0.142 (0.012) & 1.00 & 0.161 (0.014) & 1.00 \\
& $\varepsilon=4.0$ & 0.021 (0.005) & 1.00 & 0.017 (0.005) & 1.00 & 0.018 (0.004) & 1.00 \\
\hline
\multicolumn{8}{l}{\textit{Varying $m$ with $n=800$ and $\varepsilon=3.0$}} \\
\hline
& $m=1000$ & 0.470 (0.032) & 0.27 & 0.488 (0.008) & 0.03 & 0.479 (0.017) & 0.07 \\
& $m=2000$ & 0.314 (0.015) & 1.00 & 0.442 (0.079) & 0.27 & 0.452 (0.067) & 0.20 \\
& $m=3000$ & 0.236 (0.015) & 1.00 & 0.261 (0.092) & 0.87 & 0.282 (0.098) & 0.80 \\
& $m=4000$ & 0.166 (0.013) & 1.00 & 0.162 (0.015) & 1.00 & 0.168 (0.012) & 1.00 \\
& $m=5000$ & 0.119 (0.013) & 1.00 & 0.108 (0.012) & 1.00 & 0.119 (0.012) & 1.00 \\
& $m=6000$ & 0.075 (0.010) & 1.00 & 0.075 (0.008) & 1.00 & 0.079 (0.009) & 1.00 \\
& $m=7000$ & 0.047 (0.007) & 1.00 & 0.046 (0.008) & 1.00 & 0.050 (0.009) & 1.00 \\
\hline
\end{tabular}
}
\caption{Simulation results under Bi-DCSBM. 
Each error entry reports mean clustering error with standard deviation in parentheses; ``Rel.'' denotes the empirical release frequency.}
\label{tab:bipartite_sim}
\end{table}

With a larger $m$, the perturbation of one column has a smaller effect on the eigengap of $m^{-1}BB^\top$, and all Bi-NetPTR methods improve. Using a private $\tilde\theta_0$, the release probability is small when $m \leq 2000$, and it turns to one when $m \geq 3000$. Overall, these results support the practical effectiveness of the Bi-NetPTR procedure.

\section{Real Data Analysis}
\label{sec:data}

\subsection{Flickr Contact Network}
The Flickr network comes from the Social Computing Data Repository (\url{https://datasets.syr.edu/Flickr/}), where each node represents a user and each edge represents the friendship between the two users. 
In such social networks, a user may wish to keep certain connections private, for example when the connection is with a controversial or reputationally sensitive user. This makes the dataset a natural example for edge-private community detection.
We remove the nodes with degree smaller than $100$. In the processed network, there are $n=26082$ nodes and $4416654$ undirected edges. 

We apply NetPTR and EdgeFlip on this network for $\varepsilon \in \{2.0, 2.5, 3.0, 3.5, 4.0\}$ and $\delta = 0.01$ for 20 independent repetitions. 
To each repetition, we randomly choose $10\%$ nodes for private estimate. For NetPTR, we consider private $\tilde{\theta}_0$ with $\varepsilon_1 = 0.2$ and $\varepsilon_1 = 0.5$. The parameters are chosen as $a_0 = .1$ and $A_0 = 50$. 

We assess utility by comparing each private clustering result with the non-private spectral clustering partition using the adjusted Rand index (ARI). A higher ARI indicates a better alignment between the two estimates. 
Table~\ref{tab:flickr_summary} reports the mean ARI values across 20 replications. As expected, the ARI of both methods improves as the privacy budget $\varepsilon$ increases. EdgeFlip and NetPTR have a similar performance for small $\varepsilon = 2.0$, but NetPTR clearly outperforms EdgeFlip when $\varepsilon \ge 2.5$, where the gap widens steadily as $\varepsilon$ increases. In addition, EdgeFlip is computationally less appealing because it must explicitly perturb ${n \choose 2}$ possible edges, while NetPTR adds noise to $n \times K$ entries.  
Overall, NetPTR protects the privacy, outperforms other methods, and is computationally efficient.

%This reference partition is meaningful for this dataset: the appendix shows that the induced binary split agrees closely with the auxiliary Flickr group structure, in the sense that, for 183 of the 195 groups, an overwhelming majority of members lie on the same side of the partition. 

\begin{table}[!htbp]
\centering
\small
\setlength{\tabcolsep}{4pt}
\begin{tabular}{cccc}
\toprule
$\varepsilon$ & EdgeFlip & NetPTR ($\varepsilon_1=0.2$) & NetPTR ($\varepsilon_1=0.5$) \\
\midrule
2.0 & 0.9016 (0.0009) & 0.9020 (0.0015) & 0.9038 (0.0017) \\
2.5 & 0.9055 (0.0007) & 0.9219 (0.0011) & 0.9213 (0.0013) \\
3.0 & 0.9114 (0.0008) & 0.9355 (0.0012) & 0.9360 (0.0009) \\
3.5 & 0.9206 (0.0009) & 0.9455 (0.0010) & 0.9461 (0.0006) \\
4.0 & 0.9331 (0.0004) & 0.9540 (0.0013) & 0.9546 (0.0009) \\
\bottomrule
\end{tabular}
\caption{Mean ARI on the Flickr network, with standard errors shown in parentheses.}
\label{tab:flickr_summary}
\end{table}

\subsection{Senate Roll-call Voting}
The roll-call voting records from the 109th U.S. Senate are represented as a bipartite network connecting senators to roll-call votes \citep{jackman2015package, lewis2019voteview}. After preprocessing, the data contains $n=102$ legislators as left-side nodes and $m=406$ roll calls as right-side nodes. Each edge indicates that the legislator votes ``yes" to the corresponding roll call. Each legislator is encoded with their party, which is the ground truth. The dataset and processing code can be found at \url{https://tinyurl.com/NetPTR-SC}.

We apply non-private and private spectral clustering with $K=2$ to the legislators. The private estimates include EdgeFlip, Bi-NetPTR with $\varepsilon_1=0$, and Bi-NetPTR with $\varepsilon_1=0.5$. 
The edge-flipping mechanism was designed for edge-DP, and here we extend it into the node-level privacy via $\varepsilon_{edge} = \varepsilon/m$. 
The left panel in Figure~\ref{fig:senate_realdata} reports the ARI between the private outputs and the non-private benchmark versus the privacy budget $\varepsilon$, where a larger ARI indicates a higher agreement. 
As $\varepsilon$ increases, both Bi-NetPTR variants have larger ARI scores, but EdgeFlip remains at 0 since $\varepsilon_{edge}$ remains large even at moderate $\varepsilon$. This difference explicitly explains the difficulty in node-level privacy. It suggests to consider a medium privacy regime. 

At $\varepsilon = 8$, the right panel in Figure~\ref{fig:senate_realdata} shows the party composition within each recovered cluster for the non-private method and for one replicate of Bi-NetPTR with $\varepsilon_1=0.5$. 
The non-private method exhibits an almost perfect partition between the Democrats and Republicans, where the accuracy is $0.98$. The private estimate, Bi-NetPTR, largely preserves this structure with a better privacy protection, at an reduced accuracy of $0.95$ for Democrats, but the same accuracy for Republicans. Therefore, Bi-NetPTR still captures the structural information with effective privacy protection. 

\begin{figure}[htbp]
\centering
\begin{minipage}[htbp]{0.49\textwidth}
    \centering
    \includegraphics[width=\textwidth]{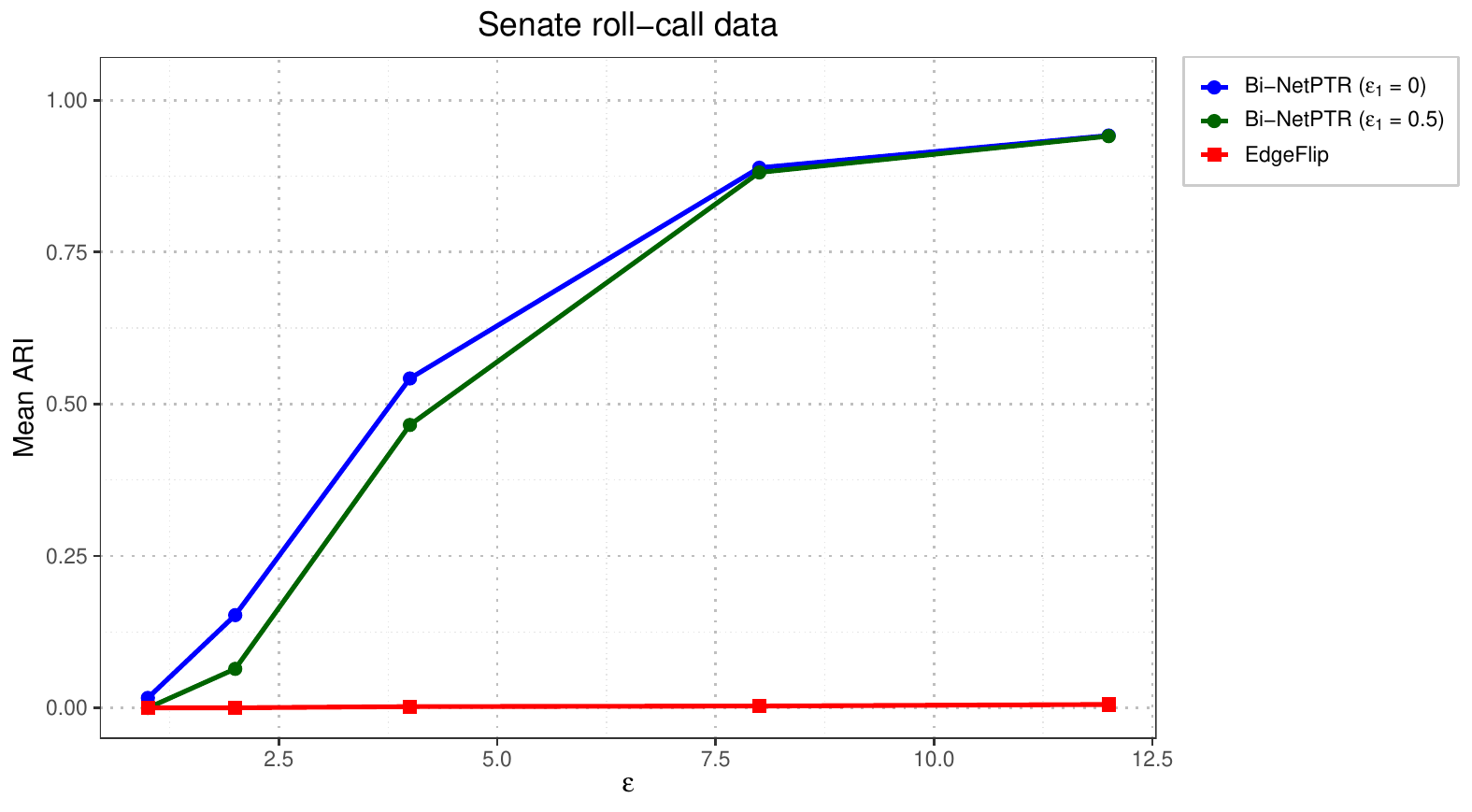}
    \caption*{(a) ARI against the non-private partition}
\end{minipage}
\hfill
\begin{minipage}[!htbp]{0.49\textwidth}
    \centering
    \includegraphics[width=\textwidth]{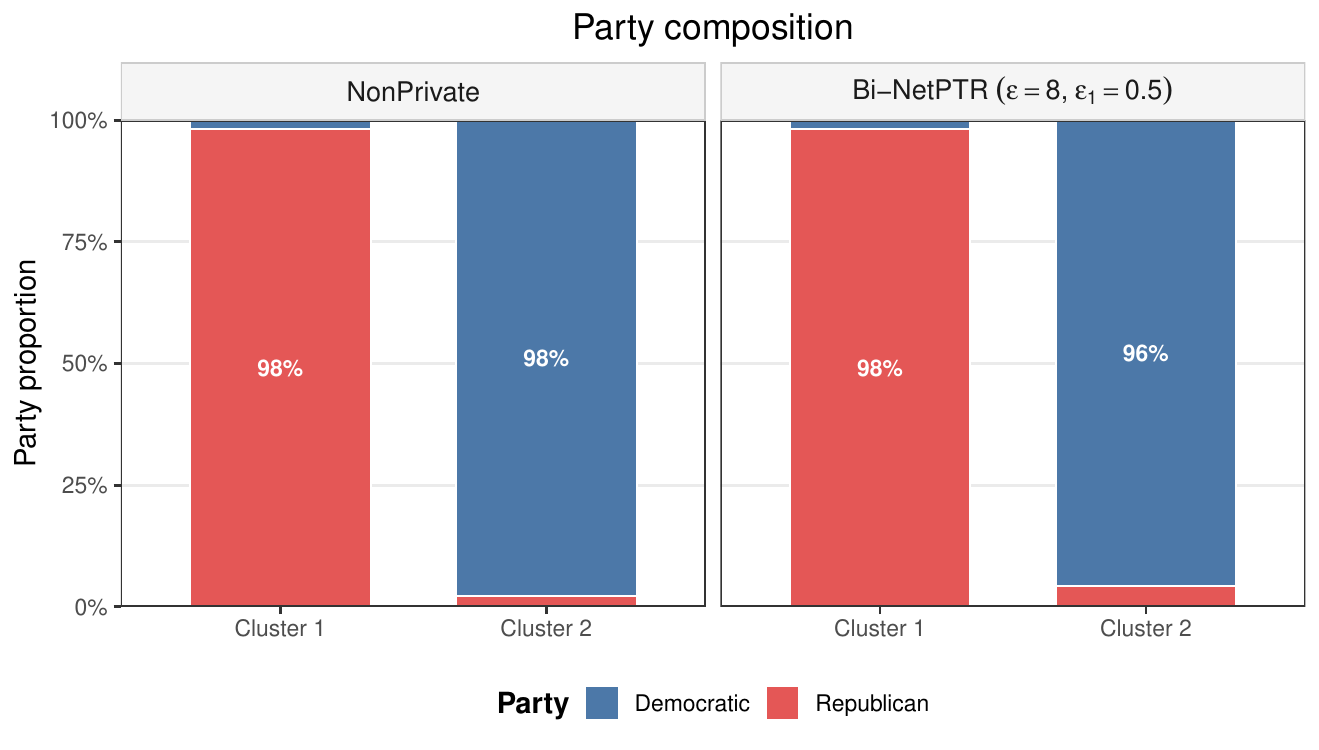}
    \caption*{(b) Party composition of recovered clusters}
\end{minipage}
\caption{Left panel: ARI between the private label estimate and the non-private label estimate versus privacy budgets. 
Right panel: the party composition within each recovered cluster for the non-private method and for Bi-NetPTR.}
\label{fig:senate_realdata}
\end{figure}

To understand how privacy protection manifests in this setting, we consider two test statistics $t_1$ and $t_2$. 
Let $t_1(j)$ be the two-sample proportion $z$-statistic on the yes-rate for $j$-th roll call between the overall Republican and Democratic, which follows standard normal distribution under null hypothesis. For $t_2(j)$, we again compare the yes-rates, but between the Republican legislators in Cluster 2 and the other Democrats. 
The histograms of $t_1$ and $t_2$ across roll-calls are summarized in Figure \ref{fig:inspect}. 
It can be found that $t_1$ exhibits a bimodal pattern, reflecting strong party distinction on many roll calls, whereas the $t_2$ values are concentrated near 0, indicating weak separation between Republican legislators assigned to the Democratic-dominated cluster and the Democratic legislators themselves. 
This pattern suggests that the privacy protection mainly affects less polarized Republicans, whose voting records depart more often from the Republican majority. By contrast, highly polarized legislators with consistently party-line voting records remain in the Republican-dominated cluster. 
In a politically polarized environment, the response profile to a single vote may itself be sensitive, because it reveals how the population divides on a contested issue. Bi-NetPTR protects this full roll-call profile while still allowing the release of aggregate community structure.
%Such weakly separating roll calls are more easily obscured by the node-private perturbation, thereby masking the individual-level contribution of these legislators while largely preserving the aggregate partisan structure.
%This also explains why misclustering primarily arises from legislators with such low-signal votes, rather than with strongly polarized votes.

\begin{figure}[!htbp]
    \centering
    \includegraphics[width=0.7\linewidth]{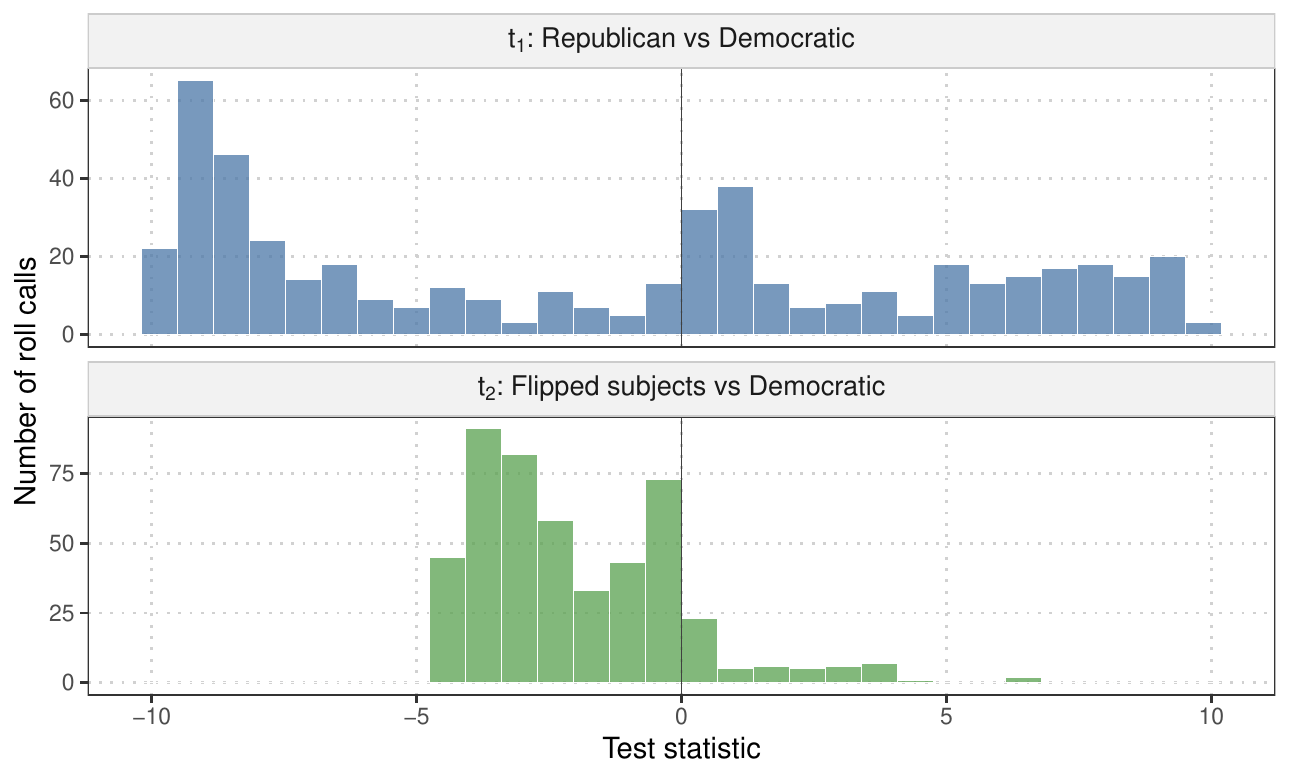}
  \caption{Roll-call diagnostic based on two test statistics.}
    \label{fig:inspect}
\end{figure}

\bigskip
\begin{center}
{\large\bf SUPPLEMENTARY MATERIAL}
\end{center}

\begin{description}

\item[Supplementary Materials:] Theoretical proofs of the main theorems and propositions, and additional numerical results on the simulated data and real data.
(pdf file)

\end{description}

\bibliographystyle{chicago}
\bibliography{ref}

@article{newman2013community,
  title={Community detection and graph partitioning},
  author={Newman, Mark EJ},
  journal={Europhysics Letters},
  volume={103},
  number={2},
  pages={28003},
  year={2013},
  publisher={EDP Sciences, IOP Publishing and Societ{\`a} Italiana di Fisica}
}

@article{rohe2011spectral,
  title={Spectral clustering and the high-dimensional stochastic blockmodel},
  author={Rohe, Karl and Chatterjee, Sourav and Yu, Bin},
  journal={The Annals of Statistics},
  volume={39},
  number={4},
  pages={596},
  year={2011},
  publisher={Institute of Mathematical Statistics}
}

@article{lei2015consistency,
  title={Consistency of spectral clustering in stochastic block models},
  author={Lei, Jing and Rinaldo, Alessandro},
  journal={The Annals of Statistics},
  pages={215--237},
  year={2015},
  publisher={JSTOR}
}

@article{DCSBM,
  title={Stochastic blockmodels and community structure in networks},
  author={Karrer, Brian and Newman, Mark EJ},
  journal={Physical Review E},
  volume={83},
  number={1},
  pages={016107},
  year={2011},
  publisher={APS}
}

@article{tong2025uniform,
  title={Uniform error bound for {PCA} matrix denoising},
  author={Tong, Xin T and Wang, Wanjie and Wang, Yuguan},
  journal={Bernoulli},
  volume={31},
  number={3},
  pages={2251--2275},
  year={2025},
  publisher={Bernoulli Society for Mathematical Statistics and Probability}
}

@article{SCORE,
  title={Fast community detection by {SCORE}},
  author={Jin, Jiashun},
  journal={The Annals of Statistics},
  volume={43},
  number={1},
  pages={57--89},
  year={2015},
  publisher={Institute of Mathematical Statistics}
}

@article{hu2024network,
  title={Network-adjusted covariates for community detection},
  author={Hu, Yaofang and Wang, Wanjie},
  journal={Biometrika},
  volume = {111}, 
  number = {4}, 
  pages={1221–1240},
  year={2024},
  publisher={Oxford University Press}
}

@article{shen2025optimal,
  title={Optimal Network-Guided Covariate Selection for High-Dimensional Data Integration},
  author={Shen, Tao and Wang, Wanjie},
  journal={arXiv preprint arXiv:2504.04866},
  year={2025}
}

@article{dwork2014algorithmic,
  title={The algorithmic foundations of differential privacy},
  author={Dwork, Cynthia and Roth, Aaron and others},
  journal={Foundations and trends{\textregistered} in theoretical computer science},
  volume={9},
  number={3--4},
  pages={211--407},
  year={2014},
  publisher={Now Publishers, Inc.}
}

@inproceedings{dwork2009differential,
  title={Differential privacy and robust statistics},
  author={Dwork, Cynthia and Lei, Jing},
  booktitle={Proceedings of the forty-first annual ACM symposium on Theory of computing},
  pages={371--380},
  year={2009}
}

@article{brunel2020propose,
  title={Propose, test, release: Differentially private estimation with high probability},
  author={Brunel, Victor-Emmanuel and Avella-Medina, Marco},
  journal={arXiv preprint arXiv:2002.08774},
  year={2020}
}

@article{shen2026efficient,
  title={Efficient Proposal-Test-Release for Minimax Optimal Estimation},
  author={Shen, Tao and Tong, Xin T and Wang, Wanjie},
  journal={arXiv preprint arXiv:2605.03264},
  year={2026}
}

@article{fan2018ell_,
  title={An $\ell_{\infty}$ Eigenvector Perturbation Bound and Its Application},
  author={Fan, Jianqing and Wang, Weichen and Zhong, Yiqiao},
  journal={Journal of Machine Learning Research},
  volume={18},
  number={207},
  pages={1--42},
  year={2018}
}

@article{daviskahan,
  title={The rotation of eigenvectors by a perturbation. III},
  author={Davis, Chandler and Kahan, William Morton},
  journal={SIAM Journal on Numerical Analysis},
  volume={7},
  number={1},
  pages={1--46},
  year={1970},
  publisher={SIAM}
}

@inproceedings{mohamed2022differentially,
  title={Differentially private community detection for stochastic block models},
  author={Mohamed, Mohamed S and Nguyen, Dung and Vullikanti, Anil and Tandon, Ravi},
  booktitle={International Conference on Machine Learning},
  pages={15858--15894},
  year={2022},
  organization={PMLR}
}

@article{li2023private,
  title={Private graph data release: A survey},
  author={Li, Yang and Purcell, Michael and Rakotoarivelo, Thierry and Smith, David and Ranbaduge, Thilina and Ng, Kee Siong},
  journal={ACM Computing Surveys},
  volume={55},
  number={11},
  pages={1--39},
  year={2023},
  publisher={ACM New York, NY}
}

@article{mueller2022sok,
  title={{SoK}: Differential Privacy on Graph-Structured Data},
  author={Mueller, Tamara T and Usynin, Dmitrii and Paetzold, Johannes C and Rueckert, Daniel and Kaissis, Georgios},
  journal={arXiv e-prints},
  pages={arXiv--2203},
  year={2022}
}

@inproceedings{hay2009accurate,
  title={Accurate estimation of the degree distribution of private networks},
  author={Hay, Michael and Li, Chao and Miklau, Gerome and Jensen, David},
  booktitle={2009 Ninth IEEE International Conference on Data Mining},
  pages={169--178},
  year={2009},
  organization={IEEE}
}

@article{nguyen2023faster,
  title={Faster approximate subgraph counts with privacy},
  author={Nguyen, Dung and Halappanavar, Mahantesh and Srinivasan, Venkatesh and Vullikanti, Anil},
  journal={Advances in Neural Information Processing Systems},
  volume={36},
  pages={70402--70432},
  year={2023}
}

@article{warner1965randomized,
  title={Randomized response: A survey technique for eliminating evasive answer bias},
  author={Warner, Stanley L},
  journal={Journal of the American statistical association},
  volume={60},
  number={309},
  pages={63--69},
  year={1965},
  publisher={Taylor \& Francis}
}

@article{karwa2017sharing,
  title={Sharing social network data: differentially private estimation of exponential family random-graph models},
  author={Karwa, Vishesh and Krivitsky, Pavel N and Slavkovi{\'c}, Aleksandra B},
  journal={Journal of the Royal Statistical Society Series C: Applied Statistics},
  volume={66},
  number={3},
  pages={481--500},
  year={2017},
  publisher={Oxford University Press}
}

@article{holohan2017optimal,
  title={Optimal differentially private mechanisms for randomised response},
  author={Holohan, Naoise and Leith, Douglas J and Mason, Oliver},
  journal={IEEE Transactions on Information Forensics and Security},
  volume={12},
  number={11},
  pages={2726--2735},
  year={2017},
  publisher={IEEE}
}

@article{hehir2022consistent,
  title={Consistent spectral clustering of network block models under local differential privacy},
  author={Hehir, Jonathan and Slavkovi{\'c}, Aleksandra and Niu, Xiaoyue},
  journal={The Journal of privacy and confidentiality},
  volume={12},
  number={2},
  pages={10--29012},
  year={2022}
}

@inproceedings{mulle2015privacy,
  title={Privacy-Integrated Graph Clustering Through Differential Privacy.},
  author={M{\"u}lle, Yvonne and Clifton, Chris and B{\"o}hm, Klemens},
  booktitle={EDBT/ICDT Workshops},
  volume={1330},
  pages={247--254},
  year={2015}
}

@inproceedings{nguyen2016detecting,
  title={Detecting Communities under Differential Privacy},
  author={Nguyen, Hiep H and Imine, Abdessamad and Rusinowitch, Micha{\"e}l},
  booktitle={Workshop on Privacy in the Electronic Society-WPES 206},
  pages={83--93},
  year={2016}
}

@article{mukherjee2025local,
title={Local Differential Privacy-Preserving Spectral Clustering for General Graphs},
author={Sayan Mukherjee and Vorapong Suppakitpaisarn},
journal={Transactions on Machine Learning Research},
issn={2835-8856},
year={2025},
url={https://openreview.net/forum?id=zo5b60AuAH},
}

@article{he2024common,
  title={Common neighborhood estimation over bipartite graphs under local differential privacy},
  author={He, Yizhang and Wang, Kai and Zhang, Wenjie and Lin, Xuemin and Zhang, Ying},
  journal={Proceedings of the ACM on Management of Data},
  volume={2},
  number={6},
  pages={1--26},
  year={2024},
  publisher={ACM New York, NY, USA}
}

@article{zhen2024consistent,
  title={Consistent community detection in multi-layer networks with heterogeneous differential privacy},
  author={Zhen, Yaoming and Xu, Shirong and Wang, Junhui},
  journal={arXiv preprint arXiv:2406.14772},
  year={2024}
}

@article{klopp2026node,
  title={Node-Private Community Detection in Stochastic Block Models},
  author={Klopp, Olga and Zadik, Ilias},
  journal={arXiv preprint arXiv:2604.09078},
  year={2026}
}

@article{marchis2026node,
  title={Node-private community estimation in stochastic block models: Tractable algorithms and lower bounds},
  author={Marchis, Laurentiu and D'souza, Ethan and Fl{\'\i}dr, Tom{\'a}{\v{s}} and Loh, Po-Ling},
  journal={arXiv preprint arXiv:2605.15943},
  year={2026}
}

@article{lewis2019voteview,
  title={Voteview: Congressional roll-call votes database},
  author={Lewis, Jeffrey B and Poole, Keith and Rosenthal, Howard and Boche, Adam and Rudkin, Aaron and Sonnet, Luke},
  journal={See https://voteview. com/(accessed 27 July 2018)},
  year={2019}
}

@article{jackman2015package,
  title={Package ‘pscl’},
  author={Jackman, Simon and Tahk, Alex and Zeileis, Achim and Maimone, Christina and Fearon, Jim and Meers, Zoe and Jackman, Maintainer Simon and Imports, MASS},
  journal={Political Science Computational Laboratory},
  volume={18},
  number={04.2017},
  year={2015}
}

@article{Zhao2025,
	Author = {Zhao, Da and Wang, Wanjie and Li, Jialiang},
	Doi = {10.1007/s11222-025-10661-3},
	Isbn = {1573-1375},
	Journal = {Statistics and Computing},
	Number = {5},
	Pages = {118},
	Title = {Spectral clustering on aggregated multilayer networks with covariates},
	Url = {https://doi.org/10.1007/s11222-025-10661-3},
	Volume = {35},
	Year = {2025}
    }

\end{document}